%% file: cdc_arxive.tex
\documentclass[letterpaper, 10pt, conference]{ieeeconf}

\IEEEoverridecommandlockouts
\usepackage{graphicx}
\usepackage{hyperref}
\usepackage{amsmath}
\usepackage{amssymb}
\usepackage{xcolor}
\usepackage{comment}
\usepackage{pgfplots}
\usepackage{mathtools}
\usepackage{subcaption}
\pgfplotsset{compat=1.18}

\usepackage{algorithm}
\usepackage{algpseudocode}

\newtheorem{theorem}{Theorem}[section]
\newtheorem{lemma}[theorem]{Lemma}
\newtheorem{proposition}[theorem]{Proposition}
\newtheorem{corollary}[theorem]{Corollary}

\newtheorem{definition}[theorem]{Definition}
\newtheorem{remark}{Remark}[section]

\newtheorem{problem}{Problem}

\input{macros-abbrev.tex}
\newcommand{\Ptwo}{\ensuremath{\mathcal{P}_2(\Omega)}}
\newcommand{\PtwoM}{\ensuremath{\mathcal{P}_2(M)}}
\newcommand{\PM}{\ensuremath{\mathcal{P}_M}}
\newcommand{\diff}{\ensuremath{\mathrm{d}}}

\newcommand{\Id}{\operatorname{Id}}

\renewcommand{\dist}{\operatorname{dist}}
\newcommand{\bulletsym}{\ensuremath{\bullet}}
\newcommand{\bulletend}{%
  \relax
  \ifmmode
  \else
    \unskip\hfill
  \fi
  \bulletsym
}

\newcommand{\oprocendsymbol}{\hbox{$\bullet$}} % \diamond means proof not included
\newcommand{\oprocend}{\relax\ifmmode\else\unskip\hfill\fi\oprocendsymbol}

\newcommand{\longthmtitle}[1]{\mbox{}{\textit{(#1):}}}

\makeatother

\title{\LARGE \bf Input-to-State Stability of Gradient Flows in
  Distributional Space}

\author{Guillem Pascual and Sonia Mart{\'\i}nez\thanks{G. Pascual and
    S. Mart{\'\i}nez are with the Department of Mechanical and
    Aerospace Engineering at the University of California, San Diego,
    CA, USA. (e-mail: \{gpascualastivill, soniamd\}@ucsd.edu.)}}

\begin{document}

\maketitle

\begin{abstract}
  This paper proposes a new notion of \textit{distributional
    Input-to-State Stability} (dISS) for dynamic systems evolving in
  probability spaces over a domain. Unlike other norm-based ISS
  concepts, we rely on the Wasserstein metric, which captures more
  precisely the effects of the disturbances on atomic and non-atomic
  measures.  We show how dISS unifies both ISS and Noise to State
  Stability (NSS) over compact domains for particle dynamics, while
  extending the classical notions to sets of probability
  distributions. We then apply the dISS framework to study the
  robustness of various Wasserstein gradient flows with respect to
  perturbations. In particular, we establish dISS for gradient flows
  defined by a class of $l$-smooth and $\lambda$-convex 
  functionals subject to bounded
  disturbances, such as those induced by entropy in optimal transport.
  Further, we study the dISS robustness of the large-scale algorithms
  when using Kernel and sample-based approximations. This results into
  a characterization of the error incurred when using a finite number
  of agents, which can guide the selection of the swarm size to
  achieve a mean-field objective with prescribed accuracy and
  stability guarantees.

  % While dISS unifies and
  % is equivalent to ISS (resp.~Noise ISS), of deterministic
  % (resp. stochastic) particle dynamics over compact domains, the
  % former is a stronger notion over non-compact ones. Then, we apply
  % the dISS framework to study disturbances of transport algorithms of
  % large-scale multi-agent systems. In particular, we pay attention to
  % the case of stochastically perturbed and limited-range versions of
  % continuous and discrete-time optimal transport algorithms.
\end{abstract}

%% Options for titles (10 words is the best max): 1) Distributional
%% ISS (too brief), 2) Distributional ISS of Very Large Swarms, 3)
%% include the wording 'optimal transport' in the title
%% e.g. input-to-state stability in distributional space with
%% applications to optimal transport

%%\margin{ In a journal version, we should definite write Dynamic
%%  Systems. If we consider a consensus problem on Barycenters as
%%  Carrillo and student, we would have a system that is not gradient
%%  and still we can study its dISS properties---it would be a nice
%%  picture}

\section{Introduction}

Recent advances in artificial intelligence have renewed the interest
in the practical deployment of very large robotic swarms. % ; with
% applications in disaster response, environmental monitoring to
% agriculture.
Made of low-cost simple robots, swarms are highly adaptive while
exhibiting remarkable resilience. %through their large numbers.
Yet, their coordinated control remains complex, as is affected by
multiple
%disturbances arising from
communication and sensor failures, enviromental events, and
uncertainty. Thus, analyzing the impact of the effect of disturbances
% goals  of the whole ensamble of agents
becomes paramount to ensuring swarm safety and graceful stable
degradation. Particularly motivated by this, we seek to develop and
analyze a bounded input-disturbance-to-state stability notion that is
applicable to general probability measures.  %that model
%swarms.
This question is also relevant in stochastic control and machine
learning applications, such as generative learning, which rely on
algorithms that evolve in spaces of probabilities. 
%  \new{The efficient
%  approximation of such dynamic systems can be understood as a type of
%  perturbation, thus the accurate characterization of their effects
%  becomes important.}

% The state of swarm systems is often modeled as a probability
% distribution, representing the density of agents in the domain.  Thus,
% our goal can be summarized as analysing the effect of disturbances on
% dynamic systems defined in probability spaces.

% \margin{We could motivate the research not only by swarms, but also by
%   interest in generative learning in ML, and stochastic control
%   systems. Recall the introduction that Panos Tsiotras gave in his
%   seminar. I've added this paragraph, but we could phrase the whole
%   opening motivating paragraph in this more general (and ambitious)
%   tone. If we do that, the paper could be read by more people}

%This is motivated by the coordination of large-scale
%multi-agent systems, where the state of the system is modeled via 
%probability distributions which encode the density of agents in the domain.
%The disturbances that agents are subject to perturb the trajectory of the density as a whole,
% pose the question of how close it remains to the ideal one.

% In order to characterize this notion of robustness, we lift to the
% space of probability distributions the notion of ISS, first introduced
% by~\cite{EDS:89}. This notion captures the effect of input
% disturbances on a dynamical system in a convenient and efficient way.
Introduced in~\cite{EDS:89} for deterministic dynamical systems in
finite-dimensional vector spaces, Input-to-State Stability (ISS) has
been extended to multiple domains.  For stochastic
systems, % several approaches have
% been proposed, for continuous SDEs
NSS~\cite{HD-MK-JW:01,LC-ZJ-EDS:25-arXiv}, and its $p$-th moment generalization
($p$thNSS)~\cite{DMN-JC:14}, alongside $p$-th moment ISS~\cite{LH-XM:09}
serve as a probabilistic counterpart of ISS for
continuous SDEs; while probabilistic ISS
(pISS)~\cite{PC-RKC-MT-ADM:23} applies to discrete-time systems. ISS
has also been extended to the infinite-dimensional
realm~\cite{SD-AM:12}, with applications to PDE
control~\cite{IK-MK:19} and large-scale swarm
control~\cite{TZ-QH-HL:21,TZ-QH-HL:21b}. In this case, the
distribution of agents is viewed as an element of an $L_p$ space. A
drawback of this approach is that it does not capture the geometry of
measures ($L^p$ just computes ``vertical distances'' while particles
move in the domain), and fails to assess the representation of finite
swarms by discrete measures.

In recent years, the theory of Optimal Transport (OT) has been a
fast-growing subject in applications that involve analysis in
probability spaces. Originally posed as an optimization problem, under
certain conditions~\cite{CV:08,FO:01} OT provides a geometric
characterization of a probability space as a Riemannian manifold.  A
class of functional optimization problems over this space leads to the
definition of Wasserstein gradient flows~\cite{LA-NG-GS:08}, which are
linked to classical continuity
equations.  % One can also understand the space as a
% Riemannian manifold~\cite{FO:01} under appropriate conditions.
This geometric interpretation of OT has been used for (mean-field)
control in different scenarios, such as stochastic linear system
steering~\cite{YC-TTG-MP:15}, Model Predictive
Control~\cite{ME-BB:23}, coverage control~\cite{VK-SM:22}, and safe
swarm control via Banach Barrier
functions~\cite{XG-GP-SB-SM:25}.

%\margin{we could add estimation to
%  mention the work by the prof from UWash, bayesian inference wasserstein}

In addition, the OT geometry allows for the generalization of notions
of convexity, gradient dominance, and smoothness to functionals over
probability spaces, which are fundamental in optimization.  Thus,
these tools are widely used in Machine Learning, for instance to
steer samples of a certain distribution to those of a
target~\cite{AW:18}.  In this field, the analysis of algorithm
perturbations is done by computing convergence
ratios~\cite{SDM-EN-SV:25} or asymptotic
bounds~\cite{CW-JDL-QL-TM:19}.  Instead, here we aim to provide a
control-theoretic, stability perspective as in~\cite{ES:22}, for
finite-dimensional gradient flows, by accounting for the Wasserstein
geometry. % This topic has also been explored in the context of

\textit{Statement of Contributions:} In this manuscript, we introduce
the notion of distributional Input-to-State Stability (dISS) for
continuous-time gradient Wasserstein flows, establish associated
Lyapunov characterizations, and prove its applicability for a class of
$l$-smooth functionals. We show how dISS unifies and strengthens the
notions of ISS for deterministic systems, and of NSS for stochastic
systems. Then, we demonstrate its applicability in the context of swarm
control, specifically for swarms subject to entropic disturbances, and
approximated dynamics based on agent-density approximations (via KDEs
and discrete-measures). This results into a characterization of the
mean-field objective error that is incurred when using a finite number
of agents.  Finally, we validate our results through simulations.

\paragraph*{Notation} Let $\Omega \subset \real^d$ be a
compact domain with boundary $\partial \Omega$ and unit outward normal
vector $n$, denote by $\bar{\Omega}$ its closure, and the identity
mapping on $\Omega$ as $\text{Id}$.  For any $x \in \Omega$ and a
subset $M \subset \bar{\Omega}$, the distance from $x$ to $M$ is given
by $\dist(x, M) :=\inf_{y \in M} |x - y|$, where $|\cdot|$ is the
  Euclidean norm in $\real^d$. We denote the set of
probability measures over $\Omega$ as $\mathcal{P}(\Omega)$, using
$\mathcal{P}_2(\Omega)$ for those with finite second-order moment. The
Lebesgue measure on $\real^d$ is denoted by $\vartheta$, and for any
$\rho \in \mathcal{P}(\Omega)$ we use $\rho$ to refer both to the
measure and its density function when $\rho \ll \vartheta$; that is,
when $\rho$ is absolutely continuous with respect to the Lebesgue
measure.  Given $T:\Omega \to \Omega$, $T_{\#}\rho=\nu$ denotes the
push forward of the measure $\rho$ towards $\nu$ under $T$; that is,
for any measurable subset $B$ of $\Omega$, $\nu(B)=\rho(T^{-1}(B))$.
The sets $\mathcal{C}_b(\Omega)$, $\mathcal{C}^\infty(\Omega)$ and
$\mathcal{C}_c(\Omega)$ represent the space of bounded continuous
functions, infinitely differentiable functions, and those with compact
support in $\Omega$, respectively. For $p \geq 1$, the space
$L^p(\Omega; \diff\rho)$, or simply $L^p(\rho)$, is the space of
measurable functions $f$ such that that
$\|f\|_{L^p(\rho)} := (\int_\Omega |f|^p \diff\rho)^{1/p}$. Finally,
$L^\infty_c$  denotes compactly supported
$L^\infty$-densities.

\section{Preliminaries}\label{sec:prel}
Here, we review the main concepts regarding the Wasserstein space of
probability distributions; see~\cite{FS:15,MB-GK:15} for more details.
Let $\Omega \subset \real^d$ be a domain, and $\Ptwo$ the set of
probability measures over $\Omega$ and finite $2$-nd
order moments.
%%% this is weak convergence %%%
% To define convergence in this space, we use the notion of weak
% convergence.
%\begin{definition}[Weak convergence of distributions]\marginguillem{should we erase the results on weak convergence?
%they are relevant when characterizing continuity, but we don't really mention it in the paper, then we can talk 
%directly of W2}
% A sequence of distributions $\{\rho_n\}_{n\geq1}$ such tat
% $\rho_i \in \Ptwo, \forall i\geq 1$, is said to converge weakly to
% $\rho^* \in \Ptwo$, (noted as $\rho_n \rightharpoonup \rho^*$) if
% $\forall \phi \in \mathcal{C}_b (\Omega)$:
  %\begin{align*}
  %  \lim_{n\to\infty} \int_\Omega \phi\, \diff\rho_n \to \int_\Omega \phi\, \diff\rho^*.
% \end{align*}\oprocend
%\end{definition}
The space $\Ptwo$ is endowed with the following metric.
\begin{definition}\longthmtitle{Wasserstein Metric}
  The $2$-Wasserstein distance between two measures
  $\rho, \nu \in \Ptwo$ is given by
\begin{equation*}
  W_2(\rho, \nu) = \inf_{\gamma \in \Pi(\rho,\nu)}
  \left( \int_{\Omega \times \Omega} |x - y|^2 \diff\gamma(x,y) \right)^{1/2},
\end{equation*}
where $\Pi(\rho,\nu)$ denotes the set of probability distributions
over $\Omega \times \Omega$ with marginals $\rho$ and $\nu$. \oprocend
\end{definition}

%The next result connects the Wasserstein metric and weak convergence.
%\begin{lemma}[{\cite[Theorem 5.11]{FS:15}}]
%Let $\Omega\subset
%^n$ be a compact set,
%then the following holds:
% \begin{equation*} 
%\rho_n \rightharpoonup \rho^*
%\iff W_2(\rho_n,\rho^*)\rightarrow 0.
%\end{equation*}\oprocend
%\end{lemma}

Endowed with it, $(\Ptwo,W_2)$ defines the $2$-Wasserstein metric
space. The distance value is directly related to the optimal transport
problem, and the original Monge formulation to solve it, which is
\begin{equation}\label{eq:monge}
  \text{(MP)}\quad  C(\rho,\nu) =
  \inf_{T: T_{\#}\rho = \nu} \left( \int_\Omega c(x ,T(x))  \diff\rho(x) \right). 
\end{equation}

%A more approachable formulation of the optimal transport problem is the Kantorovich relaxation:
%\begin{equation}\label{eq:kantorovich}
%  \text{(KP)}\quad \inf_{\gamma \in \Pi(\rho,\nu)} \left( \int_{\Omega \times \Omega} c(x,y)\diff\gamma(x,y) \right),
%\end{equation}
Results regarding existence and uniqueness of optimal transport maps
$T$ over non-compact domains can be found in~\cite{AF-AF:10}, which
require absolute continuity of $\rho$ and certain regularity
conditions on $c$. % which are satisfied by $c(x,y) = |x-y|^2$.
%A relevant result is the following.
%\begin{theorem} [{\cite[Theorem 1.17] {FS:15}}]
%  Given $\rho$ absolutely continuous and $\nu$ probability measures on
 % a compact domain $\Omega \subset \real^n$ with a zero measure border
 % $\partial \Omega$, there exists an optimal transport plan for the
  %cost $c(x, y) = h(x - y)$ with $h$ strictly convex. The map is
 % unique and there exists a function $\varphi$, named Kantorovich
 % potential, such that it has the form
 % $T(x) = x - (\nabla h)^{-1}(\nabla\varphi(x))$. The optimal coupling
  %then corresponds to $(\Id, T)_{\#} \rho$.\oprocend
%\end{theorem}
In particular, these hold
for %$W_2^2$ is the solution of the optimal transport problem~\eqref{eq:monge}
$c(x,y) = |x-y|^2$, which makes $C(\rho,\nu) =
W_2^2(\rho,\nu)$. Denote by $T_{\nu\to\rho}$ the optimal transport map
from an absolutely continuous $\nu$ to another $\rho$. Given two
$\rho_0,\rho_1 \in \Ptwo$ a generalized geodesic
$\mu_t:[0,1]\to \Ptwo$ with base measure $\nu$, is
$\mu_t=((1-t)T_{\nu\to\rho_1}+t T_{\nu\to\rho_0})_{\#}\nu$. When
$\nu$ is taken to be $\rho_0$, the generalized geodesic reduces to
a (standard) geodesic.
%Also note that existence of solutions
%to the Kantorovich formulation does not imply existence of solutions
%to the Monge formulation \eqref{eq:monge}, but the opposite does
%hold. In that case, $W_2^2$ is also the solution to the Monge
%formulation.
We define a tangent space in the $\Ptwo$ as follows. 
% note that it can be defined for any measure, not nec abs cont
\begin{definition}
  \longthmtitle{Tangent Space {\cite[Definition 8.4.1]{LA-NG-GS:08}}}
  Let $\rho \in \Ptwo$.  The tangent space at $\rho$,
  $T_\rho \Ptwo \subseteq L^2(\Omega;\diff \rho)$, is defined as the closure
  \begin{equation*}
    T_\rho \Ptwo 
    := \overline{\{ \nabla \phi : \phi \in C^\infty(\Omega)\cap C_c(\Omega) \}}^{L^2(\Omega;\diff \rho)} .
  \end{equation*}
  Thus, elements of $T_\rho \Ptwo$ are gradient vector
  fields $\nabla\phi$ which are $L^2(\Omega;\diff \rho)$
  integrable. \oprocend
\end{definition} 

\begin{definition}
  \longthmtitle{Inner Product} An inner product can be induced from
  that of $L^2(\Omega;\diff \rho)$, to $T_\rho \Ptwo$ as:
    \begin{equation*}
      \langle v_1, v_2 \rangle_\rho
      \triangleq \int_\Omega \langle v_1, v_2 \rangle \diff\rho, \quad v_1, v_2 \in T_\rho \Ptwo.
    \end{equation*}
    This inner product induces the $L^2(\Omega;\diff \rho)$ norm,
    $\|\cdot \|_{\rho}$.\oprocend
  %\begin{equation*}
  %      \|v\|_{T_\rho \Ptwo} := \|v\|_{L^2(\Omega;\diff \rho)} = \left( \int_\Omega |v(x)|^2 \diff\rho(x) \right)^{1/2}.
  %  \end{equation*} \oprocend
\end{definition}

A functional in the Wasserstein space is a mapping of the form
$F : \Ptwo \longrightarrow \real$. Important
properties of functionals include the following, and can be found in
{\cite[Chapter 7]{FS:15}}.
%\begin{definition}[ Regular Functional]
%  $F$ is regular if for all $\rho \in \Ptwo \cap L^\infty_c$ and
%  $\epsilon \in [0,1]$: 
%  \begin{equation*}
%    F((1-\epsilon)\rho + \epsilon \bar{\rho}) < \infty, \quad
%    \forall \bar{\rho} \in \Ptwo \cap L^\infty_c.
%  \end{equation*}
%  \oprocend
%\end{definition}

\begin{definition}\longthmtitle{First Variation}\label{def:first-variation}
  Given a functional $F$, the first variation of $F$ at $\rho_0$,
  denoted $\frac{\delta F}{\delta \rho}|_{\rho_0}$, is defined as any
  measurable function that satisfies:
  \begin{equation*}  
    \frac{d}{d\epsilon} F(\rho_0 + \epsilon \chi) \bigg|_{\epsilon=0}
    = \int_{\Omega} \frac{\delta F}{\delta \rho}|_{\rho_0}\, d\chi,
  \end{equation*}
  for every perturbation $\chi = \tilde{\rho} - \rho_0$ with
  $\tilde{\rho} \in \Ptwo \cap L_c^\infty$. \oprocend
\end{definition}

%\begin{definition}[Directional Derivative]
%  The derivative of a functional $F$ at $\rho_0$ along a tangent
%  vector field $v : \Omega \to \real^n$ is:
%  \begin{align*}
%    D_v F(\rho_0) =
%    \langle\nabla \frac{\delta F}{\delta \rho}|_{\rho_0},v\rangle_{\rho_0}.
%  \end{align*} \oprocend
%\end{definition}

%\begin{definition}[Geodesic Convexity in $\Ptwo$]
%  A functional $F(\rho)$ is said to be geodesically convex if, for any
%  geodesic $\rho_t:=((1-t)Id+tT_{\rho_0\rightarrow\rho_1})_{\#\rho_0}$
%  between $\rho_0,\rho_1$, $F(\rho_t)$ is convex in $t$. \oprocend
%\end{definition}

\begin{definition}\longthmtitle{$\lambda$ Geodesic
    Convexity}\label{def:lambda-convexity}\label{def:gen-convex}
  A functional $F$ is said to be $\lambda$-\emph{generalized 
  geodesically convex}  with respect to a base $\nu$ if for any %absolutely continuous
  measure $\rho_0 \in \Ptwo$ and $\rho_1\in\Ptwo$ it holds that:
 %\begin{equation}
 %  F(\rho_t)\leq(1-t)F(\rho_0)+tF(\rho_1)-\frac{\lambda}{2}t(1-t)W_2^2(\rho_0,\rho_1).
 %\end{equation}
     %      Equivalently:
  
{\small 
  \begin{equation}\label{eq:lambda-convex}
    F(\rho_1) \geq F(\rho_0) 
    +\langle \nabla \frac{\delta F}{\delta \rho}|_{\rho_0}(T_{\nu\to\rho_0}),v\rangle_{\nu}+ 
    \frac{\lambda}{2} W_{2,\nu}^2(\rho_0,\rho_1),
  \end{equation}
}

\noindent where $v = T_{\nu\to\rho_1} - T_{\nu\to\rho_0}$ is the
initial velocity of the generalized geodesic from $\rho_0$ to $\rho_1$
with base measure $\nu$, and $W_{2,\nu}^2(\rho_0,\rho_1)$ corresponds to the transport
cost through the base measure:
\begin{equation*}
   W_{2,\nu}^2(\rho_0,\rho_1)=\int_\Omega |T_{\nu\to\rho_1}(x) - T_{\nu\to\rho_0}(x)|^2\diff\nu.
\end{equation*}
Note that $W_{2,\nu}(\rho_0,\rho_1)\geq W_2(\rho_0,\rho_1)$ always.
The second term defines the directional
derivative $D_v F(\rho_0)$, which employs the inner product of $T_{\rho_0} \Ptwo$.
\oprocend
\end{definition}

A functional is \textit{$\lambda$-displacement convex} if the
expression in~\eqref{eq:lambda-convex} holds for regular geodesics.
That is, for any pair of densities $\rho_1,\rho_0$, we take
as base measure one of them. We refer as $\lambda$-convex to functionals
that are either generalized geodesically or displacement convex.
  % couplings instead of transport maps, one avoids the restriction to
  % singular measures, \margin{do you mean the restriction to abs cont
  %   measure?} we keep the presented definition for clarity.} 
%\margin{I comment this out; it's too much info}

%The previous definition extends to \textit{$\lambda$ generalized
%geodesic convexity} via the concept of generalized geodesic
%w.r.t.~a base (absolutely continuous) measure $\nu$.  Let $\rho_0$,
%$\rho_1 \in \Ptwo$, and let $T_{\nu\to\rho_0}$ (resp.~$T_{\nu\to\rho_1}$) be the optimal
%transport from $\nu$ to $\rho_0$ (resp.~$\rho_1$). Then the
%inequality~\eqref{eq:lambda-convex} should hold for $v = T_{\nu\to\rho_1}  - T_{\nu\to\rho_0}$.
% \change{For
% generalized geodesics w.r.t.~a base measure $\nu$,
% $\rho_t:=((1-t)T_0+tT_1)_{\#\rho_0}$ where $T_{0\#\nu}=\rho_0$, and
% $T_{1\#\nu}=\rho_1$, the functional is defined to be\lambda$
% generalized geodesically convex.} 
In the following, we assume that
$F$ is differentiable (in the Wassertein sense) and lower
semicontinuous, which implies that the first variation is well defined
and its gradient belongs to the tangent space. We assume that the set
of minimizers $\mathfrak{P}^*$ of
$F$ is non-empty, and we denote $F^*=F(\mathfrak{P}^*)$.
\begin{definition}\longthmtitle{Quadratic Growth}\label{def:quad-growth}
  The functional $F$ satisfies a quadratic growth inequality if the
  following holds:
    \begin{equation*}
        W_2^2(\rho,\rho^*)\leq\frac{2}{\lambda}(F(\rho)-F(\rho^*))\tag*{\oprocend}
    \end{equation*}
\end{definition}
%\begin{remark}
%    The quadratic growth for the entropy functional is the \emph{Talagraand Inequality ($T_2$)}
%\end{remark}

\begin{definition}\longthmtitle{Gradient Dominance}\label{def:grad-dom}
  The functional
  $F$ satisfies gradient dominance if the following holds:
  \begin{equation*}
      \Big\|\nabla\frac{\delta F}{\delta \rho}|_{\rho_0}\Big\|^2_{\rho_0}\geq2\lambda(F(\rho_0)-F(\rho^*)).\tag*{\oprocend}
  \end{equation*}
\end{definition}

%\begin{remark}
%    The gradient dominance for the entropy functional is the \emph{Logarithmic Sobolev Inequality (LSI).} \oprocend
%\end{remark}
\begin{lemma}\label{le:convex}
  If the functional $F$ is either $\lambda$ geodesically convex
    or $\lambda$-displacement convex, it
  satisfies both quadratic growth and gradient dominance.\oprocend
\end{lemma}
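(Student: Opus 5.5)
We work with the first-order inequality~\eqref{eq:lambda-convex}, with $\lambda>0$, and specialize the base measure. For displacement convexity the base is $\nu=\rho_0$. In the generalized case we also take $\nu=\rho_0$ (assuming $\rho_0$ absolutely continuous, or approximating by such measures and passing to the limit via lower semicontinuity of $F$). Then $T_{\nu\to\rho_0}=\Id$ and $W_{2,\nu}(\rho_0,\rho_1)=W_2(\rho_0,\rho_1)$. With $v=T_{\rho_0\to\rho_1}-\Id$, the inequality becomes
\begin{equation*}
  F(\rho_1)\geq F(\rho_0)+\Big\langle \nabla\frac{\delta F}{\delta\rho}\Big|_{\rho_0},v\Big\rangle_{\rho_0}+\frac{\lambda}{2}W_2^2(\rho_0,\rho_1),
\end{equation*}
where $\|v\|_{\rho_0}=W_2(\rho_0,\rho_1)$ by optimality of the map. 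Both properties follow from this single inequality by choosing which measure is the base point.

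\textbf{Quadratic growth.} Take $\rho_0=\rho^*\in\mathfrak{P}^*$ and $\rho_1=\rho$. Since $\rho^*$ minimizes $F$, the directional derivative of $F$ at $\rho^*$ along $v$ is nonnegative: the geodesic $\rho_t$ from $\rho^*$ to $\rho$ satisfies $F(\rho_t)\ge F(\rho^*)$ for all $t$, so $\frac{d}{dt}F(\rho_t)|_{t=0^+}\ge 0$. This derivative is exactly $\langle \nabla\frac{\delta F}{\delta\rho}|_{\rho^*},v\rangle_{\rho^*}$ by differentiability of $F$ along geodesics. Dropping the nonnegative term gives $F(\rho)-F(\rho^*)\ge \frac{\lambda}{2}W_2^2(\rho,\rho^*)$, which is the quadratic growth inequality of Definition~\ref{def:quad-growth}.

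\textbf{Gradient dominance.} Take $\rho_0$ arbitrary and $\rho_1=\rho^*$. Write $g=\nabla\frac{\delta F}{\delta\rho}|_{\rho_0}$. By Cauchy--Schwarz in $L^2(\rho_0)$, $\langle g,v\rangle_{\rho_0}\ge -\|g\|_{\rho_0}W$ with $W=W_2(\rho_0,\rho^*)$. Hence
\begin{equation*}
  F(\rho^*)-F(\rho_0)\ge -\|g\|_{\rho_0}W+\frac{\lambda}{2}W^2\ge \min_{s\ge0}\Big(\frac{\lambda}{2}s^2-\|g\|_{\rho_0}s\Big)=-\frac{\|g\|_{\rho_0}^2}{2\lambda},
\end{equation*}
which rearranges to $\|g\|_{\rho_0}^2\ge 2\lambda(F(\rho_0)-F(\rho^*))$, the inequality of Definition~\ref{def:grad-dom}.

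\textbf{Main obstacle.} The delicate step is reducing the generalized-geodesic case to base $\nu=\rho_0$. This requires an optimal map from $\rho_0$ to exist (absolute continuity), and it requires the first-order term to coincide with the Wasserstein directional derivative. I would first restrict to absolutely continuous $\rho_0$, where Brenier's theorem gives the maps. I would then extend to general $\rho_0$ by density of absolutely continuous measures in $\Ptwo$, using lower semicontinuity of $F$ and stability of the inequalities under $W_2$-limits. The quadratic growth inequality passes to the limit easily. For gradient dominance one would need lower semicontinuity of the slope $\|g\|_{\rho_0}$, which I expect to justify via its metric-slope characterization for $\lambda$-convex functionals.
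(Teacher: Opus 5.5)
Your argument is essentially the paper's. Quadratic growth comes from the convexity inequality based at the minimizer, and gradient dominance from the inequality based at $\rho_0$ with $\rho_1=\rho^*$, followed by Cauchy--Schwarz and maximizing the quadratic in $W_2$. Both are done for absolutely continuous measures, as the paper also assumes, and your first-order optimality $D_vF(\rho^*)\ge 0$ is a weaker but sufficient substitute for the paper's claim that the gradient vanishes at $\rho^*$.

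The one flaw is in your optional extension to non-absolutely-continuous measures. Writing $g_\mu=\nabla\frac{\delta F}{\delta\rho}|_\mu$, lower semicontinuity of the slope gives $\|g_{\rho_0}\|_{\rho_0}\le\liminf_n\|g_{\rho_n}\|_{\rho_n}$, which is the wrong direction for passing gradient dominance to the limit. In the generalized-geodesic case you can avoid both the reduction to $\nu=\rho_0$ and the approximation. Push-forward gives $|\langle g_{\rho_0}\circ T_{\nu\to\rho_0},v\rangle_\nu|\le\|g_{\rho_0}\|_{\rho_0}W_{2,\nu}(\rho_0,\rho^*)$. Together with $D_vF(\rho^*)\ge 0$ along the generalized geodesic and $W_{2,\nu}\ge W_2$, this yields both estimates directly for any absolutely continuous base $\nu$.
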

A sketch of the proof can be found in Appendix~\ref{sec:app-le}.
A notion of Lipschitz gradient smoothness in Wasserstein
space can be defined as follows.

\begin{definition}\longthmtitle{$l$-smoothness} \label{def:lsmooth} A functional
  $F$ is said to be $l$-smooth if for any %absolutely continuous
  $\rho_0\in\Ptwo$ and $\rho_1 \in \Ptwo$: 
  \begin{equation*}
   |F(\rho_1)-F(\rho_0)-D_v F(\rho_0)|\leq \frac{l}{2} W_{2,\nu}^2(\rho_0, \rho_1),
  \end{equation*}
  where $v = T_{\nu\to\rho_1} - T_{\nu\to\rho_0}$, for some base $\nu$.
  The $l$-smoothness is \textit{local} when the constant $l$
  depends on the bounded region of the space.\oprocend

\end{definition}

%\begin{lemma}\label{le:l-smoothness}
%  Let $F$ be a $l$-smooth functional. Then the following holds:
%  \begin{equation*}
%    \int_\Omega \langle\nabla\frac{\delta F}{\delta \rho}|_{\rho_1}-\nabla\frac{\delta F}{\delta \rho}|_{\rho_0},v\rangle\diff\nu \leq l W_{2,\nu}^2(\rho_0,\rho_1),
%  \end{equation*}
%  whre $v=T_{\nu\to\rho_1} - T_{\nu\to\rho_0}$.
%\end{lemma}
\begin{definition}\longthmtitle{Lipschitz Functional}\label{def:loc-lipschitz}
 A functional $F$ is globally Lipschitz if for any pair of distributions
 $\rho_0,\rho_1$ the following holds:
 \begin{equation*}
 	|F(\rho_0)-F(\rho_1)| \leq L W_2(\rho_0,\rho_1).
 \end{equation*}
 If the constant $L$ depends on the bounded region of space, the functional is locally Lispchitz.
\end{definition}

Finally, we recall the following definitions.

\begin{definition}[Comparison functions]
  The set of functions $\mathcal{K}$ consist of the set of increasing
  functions $\alpha : \realnonnegative \to \realnonnegative$ that are
  continuous, and satisfy $\alpha(0) = 0$. In particular,
  $\mathcal{K}_\infty \subset \mathcal{K}$ are functions that are
  radially unbounded. % , i.e., $\alpha(r) \to \infty$ as
  % $r \to \infty$.
  %The sets $\mathcal{K}$ and $\mathcal{K}_\infty$ are closed under sums, products, and compositions. Functions in $\mathcal{K}_\infty$ are invertible, and satisfy $\alpha^{-1} \in \mathcal{K}_\infty$.
  The functions in $\mathcal{KL}$ have two arguments
  $\beta: \realnonnegative\times \realnonnegative \to
  \realnonnegative$, and are s.t.~for fixed $t$,
  $\beta(r,t)\in\mathcal{K}$, while for fixed $r$, $\beta(r,t)$ is a
  decreasing function such that $\beta(r,t)\to 0$, as
  $t\to\infty$.\oprocend
\end{definition}

\begin{definition}[Size function]
  Let $(X, d)$ be a metric space and $M \subset X$ be a closed set.  A
  continuous function $w : X \to \real_{\ge 0}$ is a \emph{size
    function} with respect to $M$ if there exist
  $\underline{\alpha}, \overline{\alpha} \in \mathcal{K}_\infty$ such
  that:
  \begin{equation*}
    \underline{\alpha}(d(x, M)) \le w(x) \le \overline{\alpha}(d(x, M)), \quad \forall x \in X. \tag*{\oprocend}
  \end{equation*}
\end{definition}
%\begin{example}
% Any power $w_p(x) = (d(x, M))^p$ with $p > 0$ is a size function, as one can choose $\underline{\alpha}(s) = \overline{\alpha}(s) = s^p \in \mathcal{K}_\infty$.
%\end{example}
\section{Problem formulation}
We aim to identify conditions under which limited disturbances
affecting the transport of very large swarms guarantee a certain
stable behavior w.r.t.~the unperturbed ideal trajectories. To do this,
we rely on and adapt the notion of Input to State Stability (ISS) for
Wasserstein space, and apply this notion to continuity equations.
% Wasserstein (gradient) flows, as
% well as discrete-time iterations in the $W_2$ space.  We start by
% describing

\textit{\textbf{Continuity Equations in Wasserstein space.}}  Consider
an absolutely continuous trajectory of densities
$\rho_t: (a,b) \to \Ptwo$. Let $v_t\in L^2(\Omega;\diff \rho_t)$, and
such that $\|v_t\|^2_{L^2(\rho)} \leq|\rho_t'|$, for each $t$, where
$|\rho_t'|$ is the metric derivative of $\rho_t$ in Wasserstein
space.\footnote{The metric derivative of $\rho_t$ is the limit
  $|\rho'_t|=\lim_{s \rightarrow t}\frac{W_2(\rho_s,\rho_t)}{|s -t|}$,
  which exists a.e.~$t$ as the curve of measures is absolutely
  continuous.}  Then, the trajectory $\rho_t$ satisfies the following
\textit{continuity equation (PDE) in Wasserstein
  space}~\cite[Theorem~8.3.1]{LA-NG-GS:08}:\footnote{In other words,
  for an appropriate set of smooth functions with compact support in
  $\Omega$, $\varphi$, it holds that
  $\int_{(a,b)} \int_\Omega (\partial_t \, \varphi + \langle v_t,
  \nabla \varphi \rangle )\diff \rho_t \diff t = 0$.}
\begin{equation}\label{eq:cont-flow}
\begin{cases}
  \partial_t\rho_t=-\nabla\cdot (\rho_tv_t) , & \text{in } \Omega, \\[6pt]
  \rho_t (v_t \cdot n) = 0, & \text{on } \partial \Omega.
\end{cases} 
\end{equation}
In particular, the PDE flow is a \textit{gradient flow of a
 functional $G$ on $\Ptwo$}~\cite{AF:24}, when $v_t =-\nabla \phi$, for
$\phi = \frac{\delta G}{\delta \rho}$, the first variation of $G$;
cf.~Definition~\ref{def:first-variation}.  \oprocend

% An
% absolutely continuous trajectory $\rho_t : (a,b) \to \Ptwo, $ is the
% solution (in the sense of distributions\footnote{This means that, for
%   an appropriate set of smooth functions with compact support,
%   $\varphi$, it holds that
%   $\int_{(a,b)} \int_\Omega (\partial_t \, \varphi + \langle v_t,
%   \nabla \varphi \rangle )\diff \rho_t \diff t = 0$.})  to the
% following \textit{(Wasserstein) transport equation}
% PDE~\cite[Theorem~8.3.1]{LA-NG-GS:08}:
% \begin{equation}\label{eq:cont-flow}
% \begin{cases}
%   \partial_t\rho_t=-\nabla\cdot (v_t \rho_t) , & \text{in } \Omega, \\[6pt]
%   \rho_t (v_t \cdot n) = 0, & \text{on } \partial \Omega,
% \end{cases} 
% \end{equation}
% where $v_t :\Omega \rightarrow \real^d$,
% $v_t\in L^2(\Omega;\diff \rho_t)$, and such that
% $\int_\Omega \|v_t\|^2 \diff \rho_t < |\rho_t'|$, for each $t$, where
% $|\rho_t'|$ is the metric derivative of $\rho_t$.  \footnote{The
%   metric derivative of $\rho_t$ is given as the limit
%   $\lim_{s \rightarrow t}\frac{W_2(\rho_s,\rho_t)}{|s -t|}$, which
%   exists a.e.~$t$ from absolutely continuity of the curve of
%   measures.} 
In the following, we consider velocity fields $v_t$ that are affected
by disturbance signals, $u_t \in \mathcal{U}$. Given
$v_t \in L^2(\Omega, \diff \rho_t)$, we will assume that the corresponding
solutions to~\eqref{eq:cont-flow} exist, and that the boundary
condition holds for all time. The trajectory is then given by the
expression $\rho_t(\rho_0,u)$, which we denote as $\rho_t$ when clear
from the context.  As~\eqref{eq:cont-flow} is a conservation law, if
the support of $\rho_0$ is in $\Omega$ and $v_t \cdot n = 0$, then
$\rho_t$ is a well-defined distribution over $\Omega$.

In multi-agent systems,~\eqref{eq:cont-flow} are the mean-field limit
of microscopic dynamics, as explained in~\cite{RWB:12}. In particular,
given $f:\Omega \times \mathcal{U} \rightarrow \real^d$, it holds
that:
\begin{equation} \label{eq:flow-deterministic}
    \begin{gathered}
        \dot{x}=f(x,u) \\
        x(0)=x_0\sim\rho_0
    \end{gathered}
    \quad \longleftrightarrow\quad
    \begin{gathered}
        \partial_t\rho_t=-\nabla\cdot(\rho_tf(x,u)) \\
        \rho_0(x) = \rho(0,x).
    \end{gathered}
  \end{equation}
  Where $x_0\sim\rho_0$ means $x_0$ is sampled from the distribution
  $\rho_0$.  The right direction follows by taking the large-agent
  limit on the equations on the left, while the left one
  follows via sampling. For stochastic dynamics, the evolution of both
  the microscopic and macroscopic states evolve as:
  \begin{equation}\label{eq:flow-stochastic}
    \resizebox{\hsize}{!}{$
      \begin{array}{@{}l@{}}
        dx \!=\! f(x)dt \!+\! g(x) \Sigma(t)  dw \\
        x(0) = x_0\!\sim\! \rho_0
      \end{array}
      \,\leftrightarrow\,
      \begin{array}{@{}l@{}}
        \partial_t\rho \!=\! -\nabla \!\cdot\! \big( \rho \big( f \!-\! \frac{1}{2\rho}\nabla\cdot(Q \rho) \big) \big) \\
        \rho_0(x) \!=\! \rho(0,x).
      \end{array}
      $}
  \end{equation}
  On the left, we have a stochastic differential equation where $dw$
  is a standard Wiener process, and $Q=g\Sigma\Sigma^\top
  g^\top$. The result, grounded in robotic swarms, is presented in~\cite{HH-HW:08} 
  and rewritten here as a continuity equation.

\begin{problem}\label{problem}
  Consider a perturbed version of~\eqref{eq:flow-deterministic}, where
  we replace $v_t$ by a smooth vector field,
  $\Xi: \Ptwo \times \mathcal{U}\to\real^d$, where
  $u \in \mathcal{U}$ are disturbances, and $\Xi(\rho,0)$ is the
  unperturbed velocity field. Do the following:
  \begin{enumerate}
  \item Provide a disturbance attenuation notion in the ISS sense
    (distributional ISS) that is applicable for Wasserstein flows.
    (Section~\ref{sec:diss-notions}).
  \item Evaluate the relationship of dISS with existing disturbance
    attenuation notions in the ISS sense for deterministic and
    stochastic microscopic systems. (Section~\ref{sec:iss-relation}).
    
    % \margin{Question: one thing is the
      % notion itself, another one is the characterization via an ISS
      % lyapunov function. Are the notions for discrete and continuous
      % ``the same'' but the characterizations in terms of
      % iterations/cont flows different? }\marginguillem{Actually, the
      % characterization is also pretty much equivalent, as we can
      % convert an ISS lyapunov function to a dISS Lyapunov function by
      % considering its expected value with respect to the state
      % density. the positivity bounds on the lyapunov functions are
      % equivalent to V being a proper loss function, this is explained
      % in Remark 5.4. }
  \item Study the dISS properties of perturbed Wasserstein gradient
    flows. (Section~\ref{sec:grad-flows}). \oprocend
  \end{enumerate} 
\end{problem}

% \textit{Problem 1} Determine the conditions under which the
% perturbations of the trajectories $\rho_t$ obtained
% from~\eqref{eq:cont-flow} remain close to the original flows in the
% sense of Input-to-State-Stability (ISS) with respect to a
% state-dependent input signal.
% \oprocend\\
% Our goal is to derive a notion of distributional ISS (dISS), in
% particular for continuous trajectories, which will allow for the
% analysis of perturbed flows in the Wasserstein space. To derive
% specific conditions to guarantee dISS, we focus on the particular
% scenario of perturbed Wasserstein gradient flows, which model
% several cases such as the stochastic noise on individual agents.
% \\
% In order to compare the effect of disturbances on the individual
% trajectories and on the collective distribution we are also
% interested in the characterization of the relation between the new
% notion of dISS and already existing notions, in particular ISS for
% the case of deterministic trajectories, and NSS for the case of
% stochastic trajectories.

% Furthermore, we also consider the counterpart of the previous problem
% for discrete-time perturbed trajectories.

\section{dISS for Continuity Equations}
%In this section, we aim to provide a solution for Problem 1.
%Specifically we present the notion of distributional ISS (dISS) in
%Section~\ref{sec:diss-notions}, we evaluate its relation with particle
%notions of ISS in Section~\ref{sec:iss-relation}, and we study
%perturbed gradient flows in Section~\ref{sec:grad-flows}.
We start by addressing the first two items of Problem~\ref{problem}.
\subsection{dISS notions}\label{sec:diss-notions}
We consider the perturbed version of a continuous flow in
\eqref{eq:cont-flow}, which yields:
\begin{equation}\label{eq:pert-cont-flow}
  \partial_t\rho_t=-\nabla\cdot (\rho_t \,\Xi(\rho_t,u_t)),
\end{equation}
where $\Xi: \mathcal{U}\times\Ptwo\to\real^d$ is a smooth vector
field. Here, $\Xi(\rho,0)$ corresponds to the unperturbed velocity
field that results in~\eqref{eq:cont-flow}.
Before introducing our ISS notion, we discuss
the suitability of an existing $L^2(\Omega)$-ISS.

\begin{remark}[Relationship with $L^2(\Omega)$-ISS]
  An approach to establish the ISS properties of very-large swarms
  leverages the $L^2(\Omega)$ norm~\cite{TZ-QH-HL:21,TZ-QH-HL:21b}.
  While this benefits from the Banach structure of $L^2(\Omega)$, this
  concept excludes distributions that are not absolutely continuous
  w.r.t.~the Lebesgue measure, such as Dirac deltas, which are the
  distributional representation of finite agent groups. In addition,
  the small sensitivity of $\|\cdot \|_{L^2}$ to small-measure sets
  and mass shifts results into $L^2$ distances that are not able to
  distinguish well between distributions that may differ significantly
  from a target measure.  For example, Figure~\ref{fig:l2-w2} shows
  $\rho_1,\rho_2$ with a non-overlapping support with that of a target
  density $\rho^*$. Even though their supports are separate, they are
  equally close from $\rho^*$ in the $L^2$ sense.
  %Similarly, two step-like signals
 % $\rho_1, \rho_2$ with overlapping support have the same
  %$sL^2(\Omega)$ distance to a third $\rho^*$. \margin{draw this, two
  %  steps with differnet locations for the step as compared to a
  %  constant signal} As the shapes differ significantly, sampling from
  %each of these distributions can result into a very different, even
  %inconsistent, finite agent deployments. 
   In contrast, the Wasserstein distance can explicitly account for the
  spatial displacement of mass, which motivates the introduction of
  dISS.\oprocend
\end{remark}
\begin{figure}[htbp]
    \centering
    \begin{tikzpicture}
        \begin{axis}[
            width=\linewidth,   
            height=5.5cm,         
            axis x line=bottom,
            axis y line=none,
            ymin=0, ymax=2.2,
            xmin=-1, xmax=17,   
            ticks=none,
            xlabel={},
            ylabel={}, 
            axis line style={-stealth},
            label style={font=\small}, 
        ]

        \addplot [
            domain=-1:5, 
            samples=150, 
            color=red!80!black, 
            fill=red!20, 
            very thick           
        ] 
        { (x > 0 && x < 5) ? 
          (1.8 * exp(-0.5*((x-2)/0.5)^2) * (1 / (1 + exp(-4*(x-2))))) : 0 };

        \node[red!80!black, font=\small] at (axis cs: 2.3, 1.3) {$\rho_{t_1}$};

        \addplot [
            domain=4:11, 
            samples=100, 
            color=red!80!black, 
            fill=red!20, 
            very thick
        ] 
        { (abs(x-7.5) < 3) ? 0.8 * exp(-0.5*((x-7.5)/0.7)^2) : 0 };

        \node[red!80!black, font=\small] at (axis cs: 7.5, 0.95) {$\rho_{t_2}$};

        \addplot [
            domain=10:17, 
            samples=100, 
            color=blue!80!black, 
            fill=blue!20, 
            very thick
        ] 
        { (abs(x-13.5) < 3.5) ? 0.5 * exp(-0.5*((x-13.5)/1.0)^2) : 0 };
        
        \node[blue!80!black, font=\small] at (axis cs: 13.5, 0.65) {$\rho^*$};

        \end{axis}
    \end{tikzpicture}
    
    \caption{\small The swarm relaxes from an initial skewed,
      high-concentration state $\rho_{t_1}$ (left), through a
      transitional state $\rho_{t_2}$ (middle), moving towards the
      target distribution $\rho^*$ (right). In this case
      $W_2(\rho_{t_2},\rho^*)< W_2(\rho_{t_1},\rho^*)$; however
      $\|\rho_{t_1}-\rho^*\|_{L^2} = \|\rho_{t_2}-\rho^*\|_{L^2}$, due
      to non-overlapping supports.}
    \label{fig:l2-w2}
\end{figure}
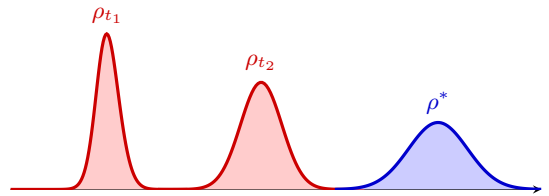

% As a consequence, the
%bounds for the dynamic density over time guaranteed by the ISS
%properties become more conservative.  \oprocend

%Motivated by this, we introduce a distributional Input to State
%Stability (dISS) for continuous-time systems.
\begin{definition}[Distributional ISS]\label{def:diss}
  Let $\mathfrak{P}^* \in \mathcal{P}_2(\Omega)$ be the set 
  of stationary points of the unperturbed flow
  $\partial_t \rho_t = -\nabla \cdot (\rho_t \Xi(0,
  \rho_t))$. Consider a trajectory $\hat{\rho}_t$ generated by the
  perturbed continuity equation \eqref{eq:pert-cont-flow} under the 
  input signal $u$. Then, the perturbed flow is said to be
  distributional ISS (dISS) if there exist
  $\beta\in\mathcal{KL}$, $\gamma\in\mathcal{K}$, such that:
\begin{equation*}
    W_2(\hat\rho_t,\mathfrak{P}^*)\leq \beta(W_2(\hat\rho_0,\mathfrak{P}^*),t)+\gamma(\|u\|_c),
\end{equation*}
$\forall \hat{\rho}_0\in\Ptwo$. Here,
$\|u\|_c=\sup_{\tau\geq0}(\|u_\tau\|_{\mathcal{U}})$, and we 
use $W_2(\rho, \mathfrak{P})\coloneqq \inf_{\nu \in \mathfrak{P}} W_2(\rho, \nu)$. \oprocend
%We adopt this notation in the
%rest of the paper.
\end{definition}

Standard ISS is characterized via ISS Lyapunov functions. We
introduce here its counterpart for dISS.
\begin{definition}\longthmtitle{dISS Lyapunov
    Functional}\label{def:dLISS}
  Consider the perturbed flow~\eqref{eq:pert-cont-flow}. Assume
  $ \rho^*\in\Ptwo $, and let
  $ V : \Ptwo \rightarrow \realnonnegative $ be a differentiable
  functional. We say that $ V $ is a \emph{dISS Lyapunov functional}
  if there are $ \psi_1, \psi_2 \in \mathcal{K}_\infty $,
  $ \chi \in \mathcal{K} $, and a continuous positive-definite
    function $ \alpha $, for which the following holds for all $t\geq 0$:
  \begin{enumerate}
  \item Positivity and distance equivalence.
    \begin{align} \label{eq:lyap-bounds} \psi_1(W_2(\rho_t,
      \mathfrak{P}^*))& \leq V(\rho_t) \leq \psi_2(W_2(\rho_t,
      \mathfrak{P}^*)).
    \end{align}
  \item Decay condition. If
  %  \begin{equation*}
   $   W_2(\rho_t, \mathfrak{P}^*) \geq \chi(\|u_t\|_\mathcal{U}),$
  %  \end{equation*}
    then %the time derivative of $ V $ satisfies
    \begin{equation} \label{eq:lyap-decay}
      \dot{V}_u(\rho_t) \leq -\alpha(W_2(\rho_t, \mathfrak{P}^*)),
    \end{equation}
    where $V_u(\rho_t) \equiv V(\rho(t;\rho_0,u))$, with a slight
    abuse of notation. The time derivative is computed as:
  %\begin{align*}
  %  \dot{V}_u(\rho) = \frac{d}{dt} V(\rho_t) \bigg|_{\partial_t \rho = -\nabla \cdot(\rho_t\Xi(u_t,\rho_t))}\\
  %  =\int_{\Omega} \nabla \left( \frac{\delta V}{\delta \rho} \right) \cdot \Xi(u_t,\rho_t)  d\rho.
  %\end{align*}
    \begin{equation*}
      \dot V_u(\rho_t)=\frac{d}{dt} V(\rho_t) \bigg|_{-\nabla \cdot(\rho_t\Xi(u_t,\rho_t))}=D_{\Xi(u_t,\rho_t)}V(\rho_t).
    \end{equation*}
  \end{enumerate}
\oprocend
\end{definition}

With this, dISS admits the following characterization. 

\begin{theorem}\label{thm:diss-lyap}\longthmtitle{Distributional ISS via Lyapunov Functional}
  Assume that for all input signal $u$, $u_t \in \mathcal{U}$ and all $s \geq 0$, the
  shifted input $\tilde{u}$ defined by $\tilde{u}_t \coloneq u(t+s)$
  satisfies $\tilde{u}_t \in \mathcal{U}$ and
  $\|\tilde{u}\|_c \leq \|u\|_c$.  Then, if there exists a dISS
  Lyapunov functional $ V : \Ptwo \rightarrow \real^+ $ satisfying the
  conditions of Definition~\ref{def:dLISS} the perturbed Wasserstein
  gradient flow is distributional ISS (dISS).
\end{theorem}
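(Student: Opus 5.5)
We follow the classical Sontag--Wang argument, with $|x|$ replaced by $W_2(\cdot,\mathfrak{P}^*)$ and the state trajectory by the scalar function $t\mapsto V(\hat\rho_t)$. The sandwich bounds \eqref{eq:lyap-bounds} turn statements about $V$ into statements about $W_2(\hat\rho_t,\mathfrak{P}^*)$, so the whole argument reduces to a one-dimensional comparison for $y(t):=V(\hat\rho_t)$.

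\textbf{Step 1: forward invariance of a sublevel set.} Set $r:=\chi(\|u\|_c)$ and $c:=\psi_2(r)$. We claim $\{V\le c\}$ is forward invariant. Suppose $y(t_0)\le c$ and $y(t_1)>c$ for some $t_1>t_0$. Take the last time $t^\star<t_1$ with $y(t^\star)=c$. On $(t^\star,t_1]$ we have $y>c\ge\psi_2(r)$, so the upper bound in \eqref{eq:lyap-bounds} gives $W_2(\hat\rho_t,\mathfrak{P}^*)> r\ge\chi(\|u_t\|_{\mathcal{U}})$. The decay condition then yields $\dot y\le-\alpha(\cdot)<0$ there, which contradicts $y(t_1)>y(t^\star)$. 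For this step we need $y$ to be absolutely continuous, with $\dot y$ given a.e.\ by $D_{\Xi}V$. This follows from the differentiability of $V$ together with the absolute continuity of solutions of \eqref{eq:pert-cont-flow}, which is assumed.

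\textbf{Step 2: decay outside the set.} While $y(t)>c$, the argument of Step~1 gives $W_2\ge\psi_2^{-1}(y)$, and hence
\[
\dot y\le-\alpha\big(W_2\big)\le -\tilde\alpha(y),\qquad \tilde\alpha(s):=\min_{\psi_2^{-1}(s)\le \sigma\le \psi_1^{-1}(s)}\alpha(\sigma).
\]
The function $\tilde\alpha$ is continuous and positive definite. We then invoke the standard comparison lemma: if $\dot y\le-\tilde\alpha(y)$, there is a $\beta_1\in\mathcal{KL}$ with $y(t)\le\beta_1(y(0),t)$ for as long as $y>c$. If $\tilde\alpha$ is only positive definite rather than class $\mathcal{K}$, we first replace it by a smaller locally Lipschitz $\mathcal{K}$ minorant on the relevant range; this is routine.

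\textbf{Step 3: combining the two regimes.} Let $T$ be the first entry time into $\{V\le c\}$. Then
\[
y(t)\le\beta_1(y(0),t)\ \text{for } t\le T,\qquad y(t)\le c\ \text{for } t\ge T,
\]
so $y(t)\le\max\{\beta_1(\psi_2(W_2(\hat\rho_0,\mathfrak{P}^*)),t),\ \psi_2(\chi(\|u\|_c))\}$ for all $t\ge0$. Applying $\psi_1^{-1}$ and using $\psi_1^{-1}(\max\{a,b\})\le\psi_1^{-1}(a)+\psi_1^{-1}(b)$ gives the dISS estimate with
\[
\beta(s,t)=\psi_1^{-1}(\beta_1(\psi_2(s),t)),\qquad \gamma=\psi_1^{-1}\circ\psi_2\circ\chi .
\]

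\textbf{Role of the shift hypothesis.} It lets us restart the argument at any time $s$ with the input $\tilde u$, so that the bound $\|\tilde u\|_c\le\|u\|_c$ keeps $c$ unchanged. We use this in Step~1 when arguing from an intermediate time, and in the semigroup/causality property needed for Step~3.

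\textbf{Main obstacle.} We expect the main difficulty to be the regularity issue in Steps~1--2. We need $t\mapsto V(\hat\rho_t)$ to be absolutely continuous, and the chain rule $\dot y=D_{\Xi}V$ to hold a.e.\ along Wasserstein curves, so that the one-dimensional comparison lemma applies. We would handle this by appealing to the differentiability of $V$ and the continuity-equation framework of \cite{LA-NG-GS:08}, specifically the chain rule along absolutely continuous curves.
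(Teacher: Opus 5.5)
Your proposal is correct and follows the same route as the paper's proof. Both prove forward invariance of the sublevel set $\{V\le\psi_2\circ\chi(\|u\|_c)\}$, obtain $\mathcal{KL}$ decay outside it by a comparison argument, and combine the two via the first entry time, arriving at the same $\beta(r,t)=\psi_1^{-1}\circ\tilde\beta(\psi_2(r),t)$ and $\gamma=\psi_1^{-1}\circ\psi_2\circ\chi$. Your version is in places more careful than the paper's: the invariance contradiction is argued on an interval rather than at a single boundary instant, the minorant $\tilde\alpha$ handles a merely positive-definite $\alpha$ (the paper writes $\alpha\circ\psi_2^{-1}$), and you do not need the entry time to be finite.
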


The proof translates the ISS Lyapunov function characterization for
infinite dimensional settings as presented in \cite{SD-AM:12}. The key
technical point relies on the fact that the solutions
of~\eqref{eq:cont-flow} are absolutely continuous, and, thus, they are
continuous w.r.t.~the metric $W_2$. For the sake of completeness, we
sketch its main steps in Appendix~\ref{sec:app-proof-lyap}.

\begin{corollary}
  A sufficient condition for a $V(\rho)$ to be a dISS Lyapunov
  functional is given by  the positivity bounds in~\eqref{eq:lyap-bounds}
  alongside the following inequality:
  \begin{equation*}
    \dot{V_u}(\rho_t)\leq -\chi(W_2(\rho_t, \mathfrak{P}^*))+ \gamma(\|u_t\|)\quad\forall t\geq0,
  \end{equation*}
  where $\chi,\gamma$ are both class $\mathcal{K}_{\infty}$ functions.
\end{corollary}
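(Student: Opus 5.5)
The plan is the standard "dissipation form implies implication form" argument: I will exhibit an explicit gain $\tilde\chi$ and decay rate $\alpha$ so that Definition~\ref{def:dLISS} holds with the same $V$. The positivity bounds~\eqref{eq:lyap-bounds} are assumed, so only the decay condition~\eqref{eq:lyap-decay} needs to be produced from the dissipation inequality
\[
\dot V_u(\rho_t)\le -\chi(W_2(\rho_t,\mathfrak{P}^*))+\gamma(\|u_t\|).
\]
Write $r_t:=W_2(\rho_t,\mathfrak{P}^*)$ for brevity.

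First, since $\chi\in\mathcal{K}_\infty$, it is invertible with $\chi^{-1}\in\mathcal{K}_\infty$. Define the new gain
\[
\tilde\chi:=\chi^{-1}\circ(2\gamma),
\]
which is a composition of $\mathcal{K}_\infty$ functions and hence lies in $\mathcal{K}_\infty\subset\mathcal{K}$. Also take $\alpha:=\tfrac12\chi$, which is continuous and positive definite.

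Next, suppose $r_t\ge\tilde\chi(\|u_t\|_{\mathcal{U}})$. Because $\chi$ is increasing, this gives $\chi(r_t)\ge 2\gamma(\|u_t\|)$, that is, $\gamma(\|u_t\|)\le\tfrac12\chi(r_t)$. Substituting into the dissipation inequality yields
\[
\dot V_u(\rho_t)\le -\chi(r_t)+\tfrac12\chi(r_t)=-\alpha(r_t),
\]
which is exactly~\eqref{eq:lyap-decay}. Hence $V$ is a dISS Lyapunov functional, and Theorem~\ref{thm:diss-lyap} then gives dISS, provided its shift-invariance assumption on $\mathcal{U}$ holds.

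There is no real obstacle. The only points of care are the following.
\begin{itemize}
\item Invertibility and monotonicity of $\chi$ are exactly what the $\mathcal{K}_\infty$ hypothesis supplies; $\chi\in\mathcal{K}$ would suffice if one restricts to the range of $\chi$.
\item One must check that $\|u_t\|$ in the corollary and $\|u_t\|_{\mathcal{U}}$ in the definition are the same quantity.
\item The time derivative $\dot V_u$ must be interpreted along solutions as in Definition~\ref{def:dLISS}, almost everywhere in $t$ if needed, which is consistent with the absolute continuity of the flow.
\end{itemize}
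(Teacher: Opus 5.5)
Your proof is correct and matches the standard dissipation-to-implication argument, which the paper leaves implicit since it states the corollary without proof: the gain $\chi^{-1}\circ(2\gamma)\in\mathcal{K}_\infty$ and the rate $\alpha=\tfrac12\chi$ turn the dissipation inequality into the decay condition~\eqref{eq:lyap-decay} of Definition~\ref{def:dLISS}, with the positivity bounds carried over unchanged. You were also right to rename the gain, because the corollary's $\chi$ is a decay rate rather than the gain $\chi$ of Definition~\ref{def:dLISS}.
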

\begin{remark}[Local dISS]
  %While the definition above formulates dISS globally (holding for all
  %$\rho_0 \in \mathcal{P}_2(\Omega)$ and all inputs), many physical
  %systems satisfy this property only locally. 
  A system is \emph{locally dISS} if there exists a forward invariant
  set $\mathcal{S}\subset \Ptwo$ and $K_u > 0$ such that dISS holds only when
  the initial error and input satisfy $\rho_0\in \mathcal{S}$,
  $\|u\|_c \leq K_u$. This restricts the domain of validity
  of~\eqref{eq:lyap-decay} to the region defined by $\mathcal{S}$.\oprocend
\end{remark}
 
\subsection{Relationship of dISS with microscopic ISS and NSS}
\label{sec:iss-relation}
Here, we establish the relationship of dISS with single-agent notions
of ISS for deterministic systems~\cite{EDS:89} and NSS for stochastic
systems~\cite{LC-ZJ-EDS:25-arXiv}. In $\real^d$, (N-)ISS characterizes
the stability towards a certain set $M\subset \real^d$. 
% while
% dISS adds a layer on top for swarms, and characterizes the stability
% towards a certain density with support on a set. 
Thus, in order to bridge this gap, % we need to
% flatten that dimension, that is,
we will limit dISS to target distributions that are supported on that
set,
\begin{equation*}
  \mathfrak{P}*=\PM \equiv \PtwoM\coloneqq\setdef{\rho}{ \text{supp}(\rho) \subset M}.
\end{equation*}
This target set reduces dISS to an ISS notion on the expected value of
the state. To see this, we use the  Monge
formulation~\eqref{eq:monge} of the distance to the set $\PM$:
\begin{equation*}
    W_2^2(\rho, \PM) = \inf_{T(x)\in M} \int_{\Omega} |x-T(x)|^2\diff\rho.
\end{equation*}
For this problem, the optimal transport map exists and
is given by $T(x) = \arg\min_{y\in M} \|x-y\|^2$, which yields,
\begin{equation}\label{eq:exp-w2}
\small
W_2^2(\rho, \PM)=\int_{\Omega}\dist^2(x,M)\diff\rho
=\mathbb{E}_{x\sim\rho}[\dist^2(x,M)].
\end{equation}
When the state is a Dirac delta, $\rho=\delta_x$, the Wasserstein
distance to the set $\PM$ reduces to
$W_2(\delta_x, \PM) = \dist(x, M)$, allowing us to draw a direct
connection between dISS and classical ISS for deterministic systems.

%\margin{This is a todo, but I believe that contractivity can also be
% defined using W2 and then it will also translate into the
%  contractivity notions of deterministic and stochastic systems in the
%  same way as we do here. It is something we can either put in a
%  remark or consider separately in an extended paper. Not a top
 % priority, though}

\begin{proposition}
  Given a set $\Omega \subset \real^d$ and a compact
  $M \subset \Omega$, the microscopic
  dynamics~\eqref{eq:flow-deterministic} are ISS with respect to~$M$
  if the density flow in~\eqref{eq:flow-deterministic} is dISS with
  respect to $\PM$. Furthermore, the converse also holds whenever the
  $\beta\in\mathcal{KL}$ function is linear in the domain variable.
\end{proposition}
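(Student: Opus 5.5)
Both directions go through the identity \eqref{eq:exp-w2}, $W_2^2(\rho,\PM)=\mathbb{E}_{x\sim\rho}[\dist^2(x,M)]$, together with the fact that the density flow in \eqref{eq:flow-deterministic} is the push-forward of $\rho_0$ under the microscopic flow map. Let $\phi_t(x_0,u)$ denote the solution of $\dot x=f(x,u)$. Then $\hat\rho_t=(\phi_t(\cdot,u))_{\#}\hat\rho_0$, which follows from the method of characteristics for the continuity equation with a smooth velocity field. I would first record this representation, since it reduces everything to pointwise statements about $\phi_t$.

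\emph{dISS $\Rightarrow$ ISS.} Fix $x_0\in\Omega$ and take $\hat\rho_0=\delta_{x_0}$. Then $\hat\rho_t=\delta_{\phi_t(x_0,u)}$. By the remark after \eqref{eq:exp-w2}, $W_2(\delta_x,\PM)=\dist(x,M)$. Substituting into the dISS estimate of Definition~\ref{def:diss} gives
\begin{equation*}
\dist(\phi_t(x_0,u),M)\le \beta(\dist(x_0,M),t)+\gamma(\|u\|_c),
\end{equation*}
which is exactly ISS with respect to $M$ with the same $\beta,\gamma$. This direction is immediate; the only care needed is that Dirac initial data are admissible in $\Ptwo$ and that the push-forward representation holds for them. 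Both are true because the characteristics are well defined for smooth $f$.

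\emph{ISS $\Rightarrow$ dISS under linearity.} Assume ISS holds with $\beta(r,t)=c(t)\,r$, where $c$ is decreasing to $0$. For a general $\hat\rho_0$, I apply the ISS bound pointwise along characteristics, square it, and use $(a+b)^2\le 2a^2+2b^2$:
\begin{equation*}
\dist^2(\phi_t(x_0,u),M)\le 2c(t)^2\dist^2(x_0,M)+2\gamma(\|u\|_c)^2 .
\end{equation*}
Integrating against $\hat\rho_0$ and using \eqref{eq:exp-w2} twice (once at time $t$ via the push-forward, once at time $0$) gives $W_2^2(\hat\rho_t,\PM)\le 2c(t)^2W_2^2(\hat\rho_0,\PM)+2\gamma(\|u\|_c)^2$. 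Taking square roots and using $\sqrt{a+b}\le\sqrt a+\sqrt b$ yields dISS with $\tilde\beta(r,t)=\sqrt2\,c(t)r$ and $\tilde\gamma=\sqrt2\,\gamma$.

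The main obstacle is this second direction, and it explains why the linearity hypothesis appears. For a general $\mathcal{KL}$ function, one would need $\mathbb{E}[\beta(\dist(x_0,M),t)^2]\le\tilde\beta(\sqrt{\mathbb{E}[\dist^2(x_0,M)]},t)^2$, which is a reverse-Jensen inequality that fails for nonlinear $\beta$. Linearity makes $\beta^2$ homogeneous of degree two in $r$, so the expectation passes through exactly. As a possible extension, I would remark that compactness of $\Omega$ bounds $\dist(x_0,M)$ by $\operatorname{diam}\Omega$. On that bounded range one could replace $\beta$ by a concave majorant in $r$ and apply Jensen, but I would present only the linear case as stated.
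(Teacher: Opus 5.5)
Your proposal is correct and follows essentially the paper's route: Dirac initial data for dISS $\Rightarrow$ ISS, and for the converse the push-forward identity $W_2^2(\hat\rho_t,\PM)=\int\dist^2(\phi_t(x),M)\,\diff\hat\rho_0$, a pointwise ISS bound on $\dist^2$, linearity of $\beta$ to pass through the integral, and square roots. The only cosmetic difference is where linearity is imposed: the paper puts it directly on the $\mathcal{KL}$ bound for $\dist^2$, obtaining $\sqrt{L(t)}$ and $\sqrt{\gamma}$, whereas you put it on the bound for $\dist$ and square via $(a+b)^2\le 2a^2+2b^2$, which is equivalent up to the factor $\sqrt{2}$.
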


\begin{proof}
  \textbf{($\text{dISS} \implies \text{ISS}$):} Take
  as initial measure a Dirac delta centered
  at the initial particle condition, $\rho_0 = \delta_{x_0}$. The
  evolution of this distribution is $\rho_t = \delta_{x_t}$, where
  $x_t$ is the particle flow. The dISS condition guarantees:
  \begin{equation*}
    W_2(\delta_{x_t}, \PM) \leq \beta(W_2(\delta_{x_0}, \PM), t) + \gamma(\|u\|_c).
  \end{equation*}
  Using the identity $W_2(\delta_x, \PM) = \dist(x, M)$
   we recover the particle ISS formulation:
  \begin{equation*}
    \dist(x_t, M) \leq \beta(\dist(x_0, M), t) + \gamma(\|u\|_c).
  \end{equation*}
  
  \textbf{($\text{ISS} \implies \text{dISS}$):} The converse relies on
  the definition of the Wasserstein distance to the set $\PM$. Using
  the change of variables formula (push-forward) for the flow map of
  the particle dynamic system, $\Phi_t(x)$, we have:
  
  {\small
  \begin{equation*}
  W_2^2(\rho_t,\PM)=\int_\Omega \dist^2(x,M)\diff\rho_t
  =\int_\Omega \dist^2(\Phi_t(x),M)\diff\rho_0 .
  \end{equation*}
  }
  Since $\dist^2(\cdot, M)$ is a size function, 
  the ISS property implies the existence $\beta\in\mathcal{KL}$
  and $\gamma\in\mathcal{K}$ such that:
  \begin{equation*}
    \dist^2(\Phi_t(x), M) \leq \beta(\dist^2(x, M), t) + \gamma(\|u\|_c).
  \end{equation*}
  Plugging this bound directly into the integral:
  \begin{equation*}
    \begin{aligned}
      \int_\Omega \dist^2(\Phi_t(x), M) \diff\rho_0 \leq \int_\Omega
      \Big( &\beta(\dist^2(x, M), t) \\
            &+ \gamma(\|u\|_c) \Big) \diff\rho_0.
    \end{aligned}
  \end{equation*}
  We now utilize the linearity of $\beta \in \mathcal{KL}$ to
  interchange the integral sign.
  \begin{equation}\label{eq:jensen}
    \int_\Omega \beta(\dist^2(x, M), t) \diff\rho_0
    \leq L(t) \int_\Omega \dist^2(x, M) \diff\rho_0 .
  \end{equation}
  Connecting back with the Wasserstein distance identity
  $W_2^2(\rho_0, \PM)= \int_\Omega \dist^2(x, M) \diff\rho_0$, we
  obtain the desired dISS bound:
  \begin{equation*}
    W_2(\rho_t, \PM) \leq \beta_d(W_2(\rho_0, \PM), t) + \gamma_d(\|u\|_c).
  \end{equation*}
where we defined
    $\beta_d (r,t)=r \sqrt{L(t)} $, and
    $\gamma_d=\sqrt{\gamma}$.
\end{proof}

\begin{remark}[Compact domains]\label{re:compact}
  When $\Omega$ is compact and $\beta(\cdot,t)$ has a finite
  right-derivative at the origin, we can bound
  $\beta(r,t)\leq L(t) r$, where
  $L(t)=\sup_{x\in\Omega} \frac{\beta(|x|,t)}{|x|}$.  Alternatively,
  without imposing conditions at the origin, one can always
  upper-bound the $\mathcal{KL}$ function on a compact domain by
  another $\mathcal{KL}$ function $\sigma(r,t)$ that is concave in its
  first argument, and then apply Jensen's inequality
  in~\eqref{eq:jensen}.  Thus, in compact domains one obtains the
  equivalence between both ISS notions.  Note that the latter
  implication refers to a \textit{set of distributions of any shape}
  over $M$, which can be far wider than desired, see Figure~\ref{fig:domain}.\oprocend
\end{remark}
  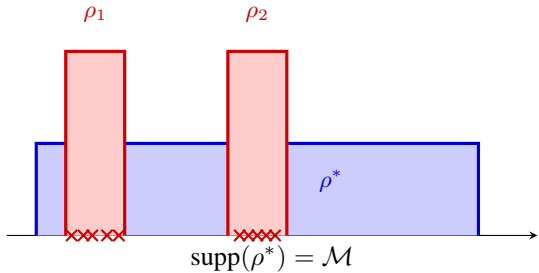
\begin{figure}[htbp]
      \centering
      \begin{tikzpicture}
          \begin{axis}[
              width=\linewidth,   
              height=5.5cm,         
              axis x line=bottom,
              axis y line=none,
              ymin=0, ymax=0.8,
              xmin=-1, xmax=17,   
              ticks=none, 
              xlabel={$\text{supp}(\rho^*)=\mathcal{M}$},
              ylabel={}, 
              axis line style={-stealth},
              clip=false
          ]

          % 1. TARGET DISTRIBUTION (CONSTANT SIGNAL)
          % Width = 16, Height = 0.0625 (Area = 1)
          %\addplot [
          %    domain=0:16, 
          %    samples=2, 
          %    color=blue!80!black, 
          %    opacity=0.5,
          %    fill=blue!20, 
          %    very thick
          %] 
          %{0.0625}; 

          \addplot[
              blue!80!black, 
              fill=blue!20, 
              very thick, 
              const plot,
          ] coordinates {(0,0) (0,0.25) (15,0.25) (15,0)};
          
          % Target Label
          \node[blue!80!black, font=\small] at (axis cs: 10, 0.15) {$\rho^*$};

          % 2. SOURCE STEP FUNCTIONS
          % Left Step
          \addplot[
              red!80!black, 
              fill=red!20, 
              very thick, 
              const plot,
          ] coordinates {(1,0) (1,0.5) (3,0.5) (3,0)};
          
          % Left Step Label
          \node[red!80!black, font=\small] at (axis cs: 2, 0.6) {$\rho_{1}$};
          
          % Right Step
          \addplot[
              red!80!black, 
              fill=red!20, 
              very thick, 
              const plot,
          ] coordinates {(6.5,0) (6.5,0.5) (8.5,0.5) (8.5,0)};
          
          % Right Step Label
          \node[red!80!black, font=\small] at (axis cs: 7.5, 0.6) {$\rho_{2}$};
          
          % 3. FINITE AGENT DEPLOYMENTS (SAMPLES ON HORIZONTAL AXIS)
          % Left samples (Red crosses on y=0)
          \addplot[
              only marks, 
              mark=x, 
              mark size=3pt, 
              color=red!80!black,
              thick
          ] coordinates {(1.2, 0) (1.6, 0) (1.9, 0) (2.4, 0) (2.8, 0)};
          
          % Right samples (Red crosses on y=0)
          \addplot[
              only marks, 
              mark=x, 
              mark size=3pt, 
              color=red!80!black,
              thick
          ] coordinates {(6.9, 0) (7.2, 0) (7.5, 0) (7.8, 0) (8.1, 0)};

          \end{axis}
      \end{tikzpicture}
      
      \caption{\small Two step functions ($\rho_1$ and $\rho_2$) compared against a constant target distribution
       ($\rho^*$). The crosses on the horizontal axis represent finite agent samples drawn
      from each step distribution, illustrating the severe spatial displacement that occurs despite
      both being supported in $\mathcal{M}$, and having constant $L^2$ error.}
      \label{fig:domain}
\end{figure}

For stochastic dynamics~\eqref{eq:flow-stochastic}, the particle
notion of ISS extends to Noise to State Stability (NSS), where the
unknown covariance is treated as the input disturbance. In particular,
NSS holds if there exists a size function of the set $M$, $w_M$, and
$\beta\in\mathcal{KL}$, $\gamma\in\mathcal{K}$, such that for all
$\epsilon\in(0,1)$:
\begin{equation*}
  \mathcal{P}\{w_M(x_t)\leq \beta(w_M(x_0),t)+\gamma(\|\Sigma\Sigma^\top\|_\infty)\}\geq 1-\epsilon .
\end{equation*}
% \begin{equation}\label{eq:flow-stochastic}
%     \resizebox{\hsize}{!}{$
%     \begin{array}{@{}l@{}}
%         dx \!=\! f(x)dt \!+\! \Sigma(t) g(x) dw \\
%         x(0) \!\sim\! \rho_0
%     \end{array}
%     \,\Leftrightarrow\,
%     \begin{array}{@{}l@{}}
%         \partial_t\rho \!=\! -\nabla \!\cdot\! \big( \rho \big( f \!+\! \frac{1}{2\rho}\nabla\cdot(Q \rho) \big) \big) \\
%         \rho(x,0) \!=\! \rho_0(x)
%     \end{array}
%     $}
% \end{equation}
% where $dw$ is a standard Wiener process, and
% $Q=g\Sigma\Sigma^Tg^T$. In this case
In distributional space, the perturbed functional is more complex,
with
$\Xi(\rho,u)=-\frac{1}{2\rho}\nabla
(Q\rho)$; see~\eqref{eq:flow-stochastic}.
\begin{proposition}\label{prop:nss-diss}
  Given a set $\Omega \subset \mathbb{R}^n$ and a compact target
  $M \subset \Omega$, the particle flow in \eqref{eq:flow-stochastic}
  is NSS with respect to the set $M$ if the density flow is dISS with
  respect to the set $\PM$. The converse does hold when $\Omega$ is
  compact or there exists a $\supscr{p}{th}$-NSS Lyapunov function. \oprocend
\end{proposition}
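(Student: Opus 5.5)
We mirror the proof of the deterministic proposition and work in both directions through the identity \eqref{eq:exp-w2}, $W_2^2(\rho,\PM)=\mathbb{E}_{x\sim\rho}[\dist^2(x,M)]$. Throughout, we take the size function to be $w_M(x)=\dist^2(x,M)$ and the input to be $\|u\|_c=\|\Sigma\Sigma^\top\|_\infty$. This choice is allowed because any size function is sandwiched between $\mathcal{K}_\infty$ functions of $\dist(x,M)$.

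\textbf{(dISS $\Rightarrow$ NSS).} Fix a deterministic initial condition and set $\rho_0=\delta_{x_0}$. The density flow \eqref{eq:flow-stochastic} then gives the law $\rho_t$ of $x_t$, so dISS yields
\[
\mathbb{E}[\dist^2(x_t,M)]^{1/2}=W_2(\rho_t,\PM)\le \beta(\dist(x_0,M),t)+\gamma(\|\Sigma\Sigma^\top\|_\infty).
\]
Given $\epsilon\in(0,1)$, Markov's inequality applied to $\dist^2(x_t,M)$ gives
\[
\mathcal{P}\bigl\{\dist^2(x_t,M)\le \epsilon^{-1}(\beta+\gamma)^2\bigr\}\ge 1-\epsilon .
\]
Using $(a+b)^2\le 2a^2+2b^2$, we obtain an NSS estimate with $\beta_\epsilon=2\epsilon^{-1}\beta^2$ and $\gamma_\epsilon=2\epsilon^{-1}\gamma^2$. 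Here the comparison functions depend on $\epsilon$, which is consistent with how NSS is stated in \cite{HD-MK-JW:01,LC-ZJ-EDS:25-arXiv}. One caveat: this bound holds only at each fixed $t$. If the NSS notion requires a bound uniform in time, we would instead invoke the stochastic-stability version of Markov's inequality that those references use.

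\textbf{(NSS $\Rightarrow$ dISS).} This direction is harder, because a probability bound does not control second moments: the $\epsilon$-exceptional event could contribute unbounded mass to $\mathbb{E}[\dist^2(x_t,M)]$. We handle the two hypotheses separately.

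\emph{Compact $\Omega$.} Here $\dist^2(x,M)\le D^2$ with $D=\mathrm{diam}(\Omega)$, so
\[
\mathbb{E}[\dist^2(x_t,M)\mid x_0]\le \beta_\epsilon(w_M(x_0),t)+\gamma_\epsilon(\|u\|_c)+\epsilon D^2 .
\]
The leftover $\epsilon D^2$ must be removed. We would do this by taking an infimum over $\epsilon$ and choosing $\epsilon$ as a function of $(w_M(x_0),t,\|u\|_c)$, for instance $\epsilon$ small enough that $\epsilon D^2\le \beta_1(w_M(x_0),t)+\gamma_1(\|u\|_c)$. The resulting right-hand side is dominated by a genuine $\mathcal{KL}$ term plus a $\mathcal{K}$ term. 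We expect this step to be the main obstacle, since it requires $\beta_\epsilon$ and $\gamma_\epsilon$ to behave monotonically and continuously in $\epsilon$. Once it is done, we integrate against $\rho_0$ and follow Remark~\ref{re:compact}: bound $\beta$ by a function that is concave in its first argument and apply Jensen's inequality, exactly as in \eqref{eq:jensen}, so that $\int\beta(\dist^2(x,M),t)\,\diff\rho_0\le\sigma(W_2^2(\rho_0,\PM),t)$. Taking square roots then gives dISS.

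\emph{$p$th-NSS Lyapunov function.} Here we use the moment result of \cite{DMN-JC:14}, which gives
\[
\mathbb{E}[w_M(x_t)]\le \beta(w_M(x_0),t)+\gamma(\|u\|_c)
\]
directly. Choosing $p$ so that $w_M\simeq\dist^2$, or otherwise converting through the comparison functions, we integrate over $\rho_0$ and apply the same concavification and Jensen argument as above. The conditional expectation given $x_0$ passes through the integral by the tower property, which completes the argument.
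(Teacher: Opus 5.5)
Your dISS $\Rightarrow$ NSS argument is the paper's: Dirac initial datum, identity \eqref{eq:exp-w2}, then Markov's inequality. With $w_M=\dist^2$, read your $\beta_\epsilon$ as $2\epsilon^{-1}\beta(\sqrt{r},t)^2$ in the variable $r=w_M(x_0)$.

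For the converse you take a different route. The paper never converts probability bounds into moment bounds. It uses NSS only at zero noise to get global asymptotic stability of $f$ and takes a converse Lyapunov function $V$. It then shows that $\mathcal{V}(\rho)=\int_\Omega V\,\diff\rho$ meets the Corollary's decay condition, with the It\^o term $\frac12\int_\Omega\mathrm{Tr}(g^\top\nabla^2V g\,u_t)\,\diff\rho$ bounded by $L_g\|u_t\|$ on compact $\Omega$. Your truncation route would be more self-contained, since it needs no converse Lyapunov theorem, no bounded Hessian, and no convexity of comparison functions.

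However, the step you flag is a genuine gap, and your suggested choice of $\epsilon$ fails. With $\|u\|_c=0$, requiring $\epsilon D^2\le\beta_1(r,t)$ forces $\epsilon\to0$ as $t\to\infty$, and then $\beta_\epsilon(r,t)$ need not decay. For example, if NSS holds with $\beta(r,t)=re^{-t}$ for every $\epsilon$, the weaker family $\beta_\epsilon(r,t)=re^{-\epsilon t}$ is also admissible. With $\beta_1(r,t)=re^{-t/2}$ your rule gives $\epsilon t\to0$, hence $\beta_\epsilon(r,t)\to r$.

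The missing idea is to fix the target accuracy first; continuity in $\epsilon$ is not needed. Restrict to $\epsilon_k=2^{-k}$ and replace the pairs by running maxima $\bar\beta_k=\max_{j\le k}\beta_j$, $\bar\gamma_k=\max_{j\le k}\gamma_j$; these are still valid and now monotone in $k$. Let $k(\delta)$ be the least $k\ge1$ with $2^{-k}D^2\le\delta/3$, and set
\[
\hat\beta(r,t)=\inf\{\delta>0:\bar\beta_{k(\delta)}(r,t)\le\delta/3\},\qquad \hat\gamma(s)=\inf\{\delta>0:\bar\gamma_{k(\delta)}(s)\le\delta/3\}.
\]
These sets are up-closed in $\delta$, so $\mathbb{E}[\dist^2(x_t,M)\mid x_0]\le\max\{\hat\beta(r,t),\hat\gamma(s)\}$. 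Moreover $\hat\beta$ and $\hat\gamma$ can be majorized by $\mathcal{KL}$ and $\mathcal{K}$ functions, after which your concavification and Jensen step works on compact $\Omega$.

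The $p$th-NSS case has a second gap. Its purpose is to cover unbounded $\Omega$, where ``the same concavification'' is unavailable. A $\mathcal{KL}$ function such as $r^2e^{-t}$ has no concave majorant in $r$ on $[0,\infty)$, and $\int_\Omega\beta(\dist^2(x,M),t)\,\diff\rho_0$ can be infinite for $\rho_0\in\Ptwo$. So a moment bound for deterministic $x_0$ cannot be lifted to arbitrary $\rho_0$ this way. You have two options. You can show that the $\beta$ delivered by \cite{DMN-JC:14} is concave (or linear) in its first argument. Or, as the paper does, you can work at the Lyapunov level with $\mathcal{V}(\rho)=\int_\Omega V\,\diff\rho$, using the convexity relations between $V$ and $\dist^2$ assumed in \cite{DMN-JC:14}. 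Then Jensen's inequality is applied to the functional rather than to the $\mathcal{KL}$ estimate.
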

The proof for compact $\Omega$ uses a dISS Lyapunov approach and is
provided in Appendix~\ref{sec:app-proof-nss}.  For unbounded domains
one can rely on the connection between $\supscr{p}{th}$-NSS Lyapunov
functions and 2nd-moment bounds; see~\cite{DMN-JC:14}.

  \section{Perturbed Gradient Flows}\label{sec:grad-flows}
%   % In this section, we study the robustness of gradient flows in
%   %  Wasserstein space. % In the context of 
%  mean-field control systems, these trajectories are of great use to
%  stabilize a fully-actuated swarm towards a target distribution in
%  an optimal way.

  Consider a differentiable functional $G$ on $\Ptwo$, and its first
  variation, $\phi=\frac{\delta G}{\delta \rho}$. Assuming that
  $\nabla \phi \cdot n_{|\partial \Omega} = 0$, where $n$ the outward
  normal to $\partial \Omega$, recall that the \textit{gradient flow
    of $G$ in Wasserstein space} can be expressed as:
  \begin{equation}\label{eq:cont-grad-flow}
    \partial_t \rho_t=-\nabla\cdot(\rho_t (-\nabla \frac{\delta G}{\delta \rho}|_{\rho_t}))=\nabla\cdot(\rho_t\nabla\phi) .
  \end{equation}
  Let $\widehat{\rho}_t$ be a trajectory of the perturbed flow:
  \begin{equation}\label{eq:pert-cont-grad-flow}
    \partial_t\widehat{\rho}_t=\nabla\cdot
    (\widehat{\rho}_t (\nabla\phi+\zeta_{u_t}(\widehat{\rho}_t))),
  \end{equation}
  where $u_t \in \mathcal{U}$ is a time-varying bounded input and 
  $\zeta_u(\rho)$ is a perturbation vector field
  depending on both the state and the input.
  We provide conditions on $G$ and the
  state-dependent perturbation in~\eqref{eq:pert-cont-grad-flow} so
  that the gradient flow is dISS.
\begin{proposition}\label{prop:iss-cont}
  Suppose that the functional $G$ has a compact set of minimizers
  $\mathfrak{P}^*$, and that it satisfies quadratic growth, gradient
  dominance, and local $l$-smoothness along regular geodesics (Section~\ref{sec:prel}).  Let
  $\zeta_u(\rho)$ be bounded,
  $\|\zeta_u(\rho)\|^2_{L^2(\Omega;\diff \rho)}\leq \gamma(\|u\|)$,
  where $\gamma\in \mathcal{K}_\infty$. 
  Then, the perturbed flow in~\eqref{eq:pert-cont-grad-flow} is dISS.
\end{proposition}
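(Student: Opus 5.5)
We take the suboptimality gap $V(\rho)=G(\rho)-G^*$ as a candidate dISS Lyapunov functional and verify the conditions of Definition~\ref{def:dLISS} through the sufficient condition in the Corollary following Theorem~\ref{thm:diss-lyap}. The theorem then gives dISS. Shifted inputs have no larger sup norm, so its hypothesis on $\mathcal{U}$ holds trivially.

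\textbf{Step 1: sandwich bounds.} The lower bound comes from quadratic growth (Definition~\ref{def:quad-growth}). Taking $\rho^*\in\mathfrak{P}^*$ as the closest minimizer, which exists because $\mathfrak{P}^*$ is compact and $W_2$ is continuous, gives
\[
V(\rho)\geq \tfrac{\lambda}{2}W_2^2(\rho,\mathfrak{P}^*),
\]
so $\psi_1(r)=\tfrac{\lambda}{2}r^2$. For the upper bound we apply local $l$-smoothness (Definition~\ref{def:lsmooth}) along the regular geodesic from $\rho^*$ to $\rho$. Since $\rho^*$ is a minimizer (stationary point), $\nabla\frac{\delta G}{\delta\rho}|_{\rho^*}=0$ $\rho^*$-a.e., so $D_vG(\rho^*)=0$. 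With the base measure equal to $\rho^*$ we have $W_{2,\nu}=W_2$, and therefore
\[
V(\rho)\leq \tfrac{l}{2}W_2^2(\rho,\mathfrak{P}^*).
\]
Because $l$ is only local, this yields $\psi_2(r)=\tfrac{l(r)}{2}r^2$ with $l(\cdot)$ nondecreasing in the radius of the bounded region, which is still of class $\mathcal{K}_\infty$. If $\Omega$ is compact, $\Ptwo$ is bounded and a single $l$ suffices.

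\textbf{Step 2: dissipation inequality.} Write $g=\nabla\frac{\delta G}{\delta\rho}|_{\rho_t}$. Along~\eqref{eq:pert-cont-grad-flow} the velocity is $-(g+\zeta_{u_t})$, so
\[
\dot V=-\|g\|_{\rho_t}^2-\langle g,\zeta_{u_t}(\rho_t)\rangle_{\rho_t}\leq -\tfrac12\|g\|^2_{\rho_t}+\tfrac12\|\zeta_{u_t}\|^2_{\rho_t}
\]
by Young's inequality. Gradient dominance (Definition~\ref{def:grad-dom}) gives $\|g\|^2\geq 2\lambda V$. Combining it with the quadratic-growth bound from Step 1,
\[
\dot V\leq -\lambda V+\tfrac12\gamma(\|u_t\|)\leq -\tfrac{\lambda^2}{2}W_2^2(\rho_t,\mathfrak{P}^*)+\tfrac12\gamma(\|u_t\|).
\]
This is exactly the form required by the Corollary, with $\chi(r)=\tfrac{\lambda^2}{2}r^2$ and the gain $\tfrac12\gamma$, both in $\mathcal{K}_\infty$. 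Theorem~\ref{thm:diss-lyap} therefore gives dISS. Alternatively, Grönwall applied to $\dot V\leq-\lambda V+\tfrac12\gamma$ gives an explicit estimate
\[
W_2(\hat\rho_t,\mathfrak{P}^*)\leq \sqrt{l/\lambda}\,e^{-\lambda t/2}W_2(\hat\rho_0,\mathfrak{P}^*)+\sqrt{\gamma(\|u\|_c)}/\lambda .
\]

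\textbf{Main obstacle.} The delicate point is the locality of $l$. The upper bound $\psi_2$ needs the trajectory to stay in a bounded region where a fixed $l$ applies. I would first handle this by noting that the Grönwall estimate bounds $V(\hat\rho_t)\leq \max\{V(\hat\rho_0),\gamma(\|u\|_c)/(2\lambda)\}$. By quadratic growth this keeps $W_2(\hat\rho_t,\mathfrak{P}^*)$ bounded, so the trajectory remains in a bounded sublevel set of $G$ that is forward invariant, and $l$ can be taken as the constant for that region. This makes the constants depend on $W_2(\hat\rho_0,\mathfrak{P}^*)$ and $\|u\|_c$ only through monotone functions, which can be absorbed into $\beta$ and $\gamma$. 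A secondary technical point is justifying the chain rule $\frac{d}{dt}G(\hat\rho_t)=D_{v_t}G$ along absolutely continuous curves. I would take it from the differentiability assumed in Section~\ref{sec:prel} together with~\cite[Theorem~8.3.1]{LA-NG-GS:08}.
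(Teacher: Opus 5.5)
Your proposal is correct and follows essentially the paper's proof. It uses the same Lyapunov candidate $G-G^*$, sandwich bounds from quadratic growth and local $l$-smoothness, Young's inequality on the cross term, and gradient dominance combined with quadratic growth to reach $\dot V\le -\tfrac{\lambda^2}{2}W_2^2(\rho_t,\mathfrak{P}^*)+\tfrac12\gamma(\|u_t\|)$. Your additions (the vanishing gradient at the closest minimizer for the upper bound, the explicit Gr\"onwall estimate, and the treatment of the locality of $l$) only make explicit steps the paper asserts without detail.
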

\begin{proof}
  Let $G$, the functional in~\eqref{eq:cont-grad-flow}, be the dISS
  Lyapunov functional candidate. The positivity bounds are satisfied
  due to quadratic growth and local $l$-smoothness, which together
  guarantee:
  \begin{equation*}
    \frac{\lambda}{2}W_2^2(\rho,\mathfrak{P}^*)
    \leq G(\rho)-G^* \leq \alpha_l(W_2^2(\rho,\mathfrak{P}^*)),
  \end{equation*}
  where $\alpha_l$ is a $\mathcal{K}_\infty$ function that bounds the
  growth of the local $l$-smoothness.  Its derivative along the
  perturbed flow becomes:

  {\small
  \begin{equation*}
    \dot{G_u}=-\int_\Omega
    \nabla\phi\cdot(\nabla\phi+\zeta_u(\rho))\diff\rho=
    -\int_\Omega(|\nabla\phi|^2+\nabla\phi\cdot \zeta_u(\rho))\diff\rho,
  \end{equation*}
  }
  Applying Young's inequality to the second term:
  \begin{equation}\label{eq:young-cont}
    \dot{G_u}\leq-\frac{1}{2}\int_\Omega |\nabla\phi|^2d\rho
    +\frac{1}{2}\int|\zeta_u(\rho)|^2\diff\rho.
  \end{equation}
  Now, from gradient dominance and quadratic growth inequalities to
  the first term yields:
\begin{equation*}
  -\int_\Omega |\nabla\phi|^2\diff \rho\leq
  -2\lambda(G(\rho)-G^*)\leq-\lambda^2 W_2^2(\rho,\mathfrak{P}^*).
\end{equation*}
Substituting in expression~\eqref{eq:young-cont}, and using the
boundedness of $\zeta_u(\rho)$, we obtain:
\begin{align*}
\dot{G}_u(\rho) &\le -\frac{1}{2}\lambda^2 W_2^2(\rho,\mathfrak{P}^*)
+ \frac{1}{2}\gamma(\|u\|).
\end{align*}
\end{proof}
\begin{remark}\longthmtitle{Relationship with (N-)ISS
    finite-dimensional gradient flows and
    applicability} %[Relationship with ISS]
% For this particular
% functional, a know result is that $G(\rho)$ is geodesically convex iff
% $\phi(x)$ is convex. \margin{references?} Thus, the dISS
% behavior of the trajectory can be studied through the following
% proposition.
  The conditions imposed on $G$ are analogous to those for
  finite-dimensional perturbed gradient flows~\cite{ES:22}. In
  particular, the previous result is directly applicable to
  $G(\rho)=\mathbb{E_\rho}[V(x)]$ when $V(x)$ is Lipschitz continuous
  and strongly-convex; as then $G$ is strongly geodesically
  convex~\cite[Chapter 7]{FS:15} (implying quadratic growth and
  gradient dominance).  However, we only need $V(x)$ to be a
  \textit{proper loss function} for the result to hold, as the next
  %The result provides a distributional counterpart
  %for stochastic gradients to be NSS~\cite{LC-ZJ-EDS:25-arXiv}.
  \oprocend
\end{remark}

\begin{lemma}\label{le:proper-loss}
  Let $V: \Omega \to \mathbb{R}$ be a proper loss function with
  respect to a compact set $M$ and assume convexity
  of the lower bound comparison functions.
  Define the functional
  $\mathcal{V}(\mu) = \int_{\Omega} V(x) \diff \mu$ for
  $\mu \in \Ptwo$. Then $\mathcal{V}$ satisfies quadratic growth,
  gradient dominance, and local $l$-smoothness
  (Section~\ref{sec:prel}), which means Proposition~\ref{prop:iss-cont}
  is applicable.\oprocend
\end{lemma}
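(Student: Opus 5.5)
The plan is to reduce all three functional inequalities to pointwise inequalities on $V$, and then integrate them against $\mu$. Two facts from earlier in the paper do most of the work. First, the first variation of the linear functional $\mathcal{V}$ is $V$ itself, so $\nabla \frac{\delta \mathcal{V}}{\delta\rho}|_{\mu}=\nabla V$ for every $\mu$. Second, the minimizers of $\mathcal{V}$ are exactly the measures supported where $V$ attains its minimum $V^*$, namely $\mathfrak{P}^*=\PM$, and then \eqref{eq:exp-w2} gives $W_2^2(\mu,\PM)=\int_\Omega \dist^2(x,M)\,\diff\mu$. Hence $\mathcal{V}(\mu)-\mathcal{V}^*=\int_\Omega (V(x)-V^*)\,\diff\mu$, and every inequality becomes a statement about integrals of functions of $\dist(x,M)$.

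I read ``proper loss function with respect to $M$'' as the existence of $\alpha_1,\alpha_2,\alpha_3\in\mathcal{K}_\infty$ with $\alpha_1(\dist(x,M))\le V(x)-V^*\le \alpha_2(\dist(x,M))$ and $|\nabla V(x)|^2\ge \alpha_3(\dist(x,M))$, where the lower functions $\alpha_1,\alpha_3$ are convex. I also take $V\in C^2$.

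\textbf{Quadratic growth.} Integrate the lower bound to get $\mathcal{V}(\mu)-\mathcal{V}^*\ge \int \alpha_1(\dist(x,M))\,\diff\mu$. The remaining task is to show $\alpha_1(r)\ge \frac{\lambda}{2}r^2$ for $r\in[0,D]$, where $D=\operatorname{diam}\Omega$. Integrating that bound and using \eqref{eq:exp-w2} then gives Definition~\ref{def:quad-growth}.

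\textbf{Gradient dominance.} By the same route, $\|\nabla V\|_\mu^2\ge\int\alpha_3(\dist)\,\diff\mu$ and $\int(V-V^*)\,\diff\mu\le \int\alpha_2(\dist)\,\diff\mu$. It therefore suffices to have the pointwise comparison $\alpha_3(r)\ge 2\lambda\,\alpha_2(r)$ on $[0,D]$, possibly after shrinking $\lambda$. Compactness of $\Omega$ and $V\in C^2$ give $\alpha_2(r)\le C r^2$. So again what is needed is a quadratic lower bound on $\alpha_3$ near the origin.

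\textbf{Local $l$-smoothness.} Take any base $\nu$, and set $T_0=T_{\nu\to\rho_0}$ and $T_1=T_{\nu\to\rho_1}$. Then $\mathcal{V}(\rho_1)-\mathcal{V}(\rho_0)-D_v\mathcal{V}(\rho_0)=\int_\Omega [V(T_1)-V(T_0)-\nabla V(T_0)\cdot(T_1-T_0)]\,\diff\nu$. A second-order Taylor bound, with $\sup_\Omega\|\nabla^2V\|\le l$, bounds the integrand by $\frac{l}{2}|T_1-T_0|^2$. Integrating gives $\frac{l}{2}W_{2,\nu}^2(\rho_0,\rho_1)$. If $\Omega$ is unbounded, $l$ depends on the bounded region, which is why the smoothness is only local. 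This also supplies the upper positivity bound used in Proposition~\ref{prop:iss-cont}.

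\textbf{Main obstacle.} The hard step is converting the $\mathcal{K}_\infty$ lower bounds into genuinely quadratic ones. Convexity alone gives only that $\alpha(r)/r$ is nondecreasing, and that fails for, e.g., $\alpha(r)=r^3$. I would first try to pass the bound through the integral with Jensen: convexity of $\alpha_1$ gives $\int\alpha_1(\dist)\,\diff\mu\ge\alpha_1(\int\dist\,\diff\mu)$. That controls only $W_1$-type quantities, though, so I expect to need an extra nondegeneracy assumption, for instance $\liminf_{r\to0}\alpha_i(r)/r^2>0$. Together with compactness of $\Omega$ and monotonicity, this yields $\alpha_i(r)\ge c_i r^2$ on $[0,D]$. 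I would state this explicitly as part of the ``convexity of the lower bound comparison functions'' hypothesis, and read off $\lambda$ from $c_1$, $c_3$ and $C$.
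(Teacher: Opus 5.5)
Your argument is sound for what it proves, but it proves a different variant of the lemma than the paper does.

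\textbf{The paper's route.} The paper never aims at the linear inequalities of Definitions~\ref{def:quad-growth} and~\ref{def:grad-dom}. It writes the lower bound in P1 as $V(x)-V^*\ge\alpha_1(\dist(x,M)^2)$, a function of the \emph{squared} distance, and uses P2 in the form $|\nabla V|^2\ge\alpha_2(V-V^*)$. With $\alpha_1,\alpha_2$ convex, Jensen then gives directly $\mathcal{V}(\mu)-\mathcal{V}^*\ge\alpha_1(W_2^2(\mu,\PM))$ and $\|\nabla V\|_\mu^2\ge\alpha_2(\mathcal{V}(\mu)-\mathcal{V}^*)$.

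Your worry that Jensen only reaches $W_1$-type quantities comes from imposing convexity in $r=\dist$ rather than in $s=\dist^2$. The latter is the stronger requirement, since $r\mapsto\beta(r^2)$ is convex whenever $\beta$ is convex and nondecreasing. The paper's route also needs no upper bound $V-V^*\le C\dist^2$. In your version that bound quietly uses $\nabla V=0$ on $M$ and that the segments to $M$ lie in $\Omega$.

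The price is that the paper only obtains nonlinear ($\mathcal{K}_\infty$-type) growth and dominance. Proposition~\ref{prop:iss-cont} then has to be rerun with $\alpha_1(W_2^2)$ and $\alpha_2(G-G^*)$ in place of $\frac{\lambda}{2}W_2^2$ and $2\lambda(G-G^*)$. This gives $\dot G_u\le-\frac12\alpha_2(\alpha_1(W_2^2))+\frac12\gamma(\|u\|)$, which still meets the sufficient condition of the Corollary, but without linear rates.

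\textbf{Your route.} You are right that the \emph{literal} linear inequalities need an extra hypothesis. Take $V(x)=|x|^4$ with $M=\{0\}$. It satisfies every stated assumption, with $\alpha_1(s)=s^2$ and $\alpha_2(s)=16s^{3/2}$ both convex. Yet $\mathcal{V}(\delta_x)-\mathcal{V}^*=|x|^4$ and $\|\nabla V\|^2_{\delta_x}=16|x|^6$ violate both linear inequalities as $x\to0$.

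So the two routes are complementary. The paper's handles degenerate minima like this one using convexity. Yours in fact no longer uses convexity at all; it trades convexity for the nondegeneracy condition and in return recovers Proposition~\ref{prop:iss-cont} verbatim.

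\textbf{Smoothness.} Your two-sided Taylor bound with an arbitrary base $\nu$ is more faithful to Definition~\ref{def:lsmooth} than the paper's argument. The paper gives only a one-sided estimate along regular geodesics and also assumes the optimal plan is induced by a map. You do not need $C^2$ here. The locally Lipschitz gradient in P3 already yields $|V(y)-V(x)-\nabla V(x)\cdot(y-x)|\le\frac{L}{2}|y-x|^2$ on a convex region containing the segments.
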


We provide the proof in Appendix~\ref{sec:app-proof-proper}, which
includes the definition of proper loss function given in~\cite{ES:22}.

\begin{remark}[Comparison to the $L^2$ Geometry]
  The stability analysis crucially relies on the geometry of the
  Wasserstein space. If the density $\rho$ is treated as an
  element of $L^2(\Omega)$, the functional
  $\mathcal{V}(\mu) = \int_\Omega V(x)\diff \mu$ becomes
  linear with respect to standard measure interpolation,
  $\mathcal{V}((1-t)\rho_0 + t\rho_1) = (1-t)\mathcal{V}(\rho_0) +
  t\mathcal{V}(\rho_1)$, see Figure~\ref{fig:interpolation}. In other words, the functional has zero
  curvature in the $L^2$ geometry.  This linearity prevents the
  satisfaction of two critical properties in
  Proposition~\ref{prop:iss-cont}, which are quadratic growth and
  gradient dominance.  At the same time, restricting the set of probability
  distributions with bounded upper and lower values:
  \begin{equation*}
    \mathcal{X}=\{\rho\in\Ptwo\;|\; 0<m\leq\rho(x) \leq M<\infty, \; x \in \Omega \text{ a.e.}\},
  \end{equation*}
  the $W_2$ distance is upper bounded by the $L_2$-norm~\cite{RP:18},
  $W_2(\rho,\nu) \le C\|\rho-\nu\|_{L^2}$.  In this case, proving
  $L^2$-ISS is more conservative than dISS.\oprocend
  \label{re:l2-proof}
\end{remark}

\begin{remark}[Relaxation of $l$-smoothness condition]
  The $l$-smoothness of the functional serves two purposes.
  First, it guarantees the upper bound of the functional in terms
  of the $W_2$ distance to the minimizer (in the discrete case
  it will be fundamental to prove the decay condition too). Second, it guarantees that 
  the solutions to the continuity equation are unique in the weak sense.
  While the former can still be guaranteed relaxing the condition to 
  local Lipschitzness (Def. \ref{def:loc-lipschitz}), the latter is harder to ensure.
Furthermore, relevant functionals in the Wasserstein Space, such as KL divergences,
  do not satisfy $l$-smoothness nor local Lipschitsness.
  However, there is a way to guarantee that  those functionals are also robust in the dISS sense 
  if they are $\lambda$-convex, as we prove in the following.
\end{remark}
\begin{proposition}
   Suppose that the functional $G$ is $\lambda$-convex
   and has a unique minimizer $\rho^*$.  Let
  $\zeta_u(\rho)$ be bounded,
  $\|\zeta_u(\rho)\|^2_{L^2(\Omega;\diff \rho)}\leq \gamma(\|u\|)$,
  where $\gamma\in \mathcal{K}_\infty$. 
  Then, the perturbed flow in~\eqref{eq:pert-cont-grad-flow} is dISS.
\end{proposition}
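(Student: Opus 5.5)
Since $G$ need not be $l$-smooth or locally Lipschitz (e.g.\ KL divergence), $G-G^*$ cannot serve as the Lyapunov functional: there is no upper bound of the form $\psi_2(W_2(\rho,\rho^*))$. I would instead use
\[
V(\rho)=\tfrac12 W_2^2(\rho,\rho^*),
\]
which satisfies the positivity and distance-equivalence bounds~\eqref{eq:lyap-bounds} trivially with $\psi_1(r)=\psi_2(r)=r^2/2$. By uniqueness of the minimizer, $\mathfrak{P}^*=\{\rho^*\}$. Lemma~\ref{le:convex} gives quadratic growth and gradient dominance, but the decay estimate should come directly from the $\lambda$-convexity inequality~\eqref{eq:lambda-convex}.

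\textbf{Derivative of $V$.} I would invoke the standard derivative formula for $\tfrac12 W_2^2(\cdot,\rho^*)$ along an absolutely continuous curve solving~\eqref{eq:pert-cont-flow} (\cite[Theorem 8.4.7]{LA-NG-GS:08}). Writing $T=T_{\rho_t\to\rho^*}$, it gives, for a.e.\ $t$,
\[
\dot V_u(\rho_t)=\int_\Omega \langle \Id - T,\; \Xi\rangle \diff\rho_t,
\qquad \Xi=-\nabla\phi-\zeta_{u_t}(\rho_t).
\]
This requires existence of the map $T$, which holds for $\rho_t$ absolutely continuous; otherwise I would work with an optimal plan. I would assume enough regularity for this and note that it is exactly where lack of smoothness must be bypassed, using the EVI framework if needed.

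\textbf{Gradient term.} Apply~\eqref{eq:lambda-convex} with base $\nu=\rho_0=\rho_t$ and $\rho_1=\rho^*$, so that $v=T-\Id$:
\[
G^*\geq G(\rho_t)+\langle\nabla\phi, T-\Id\rangle_{\rho_t}+\tfrac{\lambda}{2}W_2^2(\rho_t,\rho^*).
\]
Hence
\[
-\langle \Id-T,\nabla\phi\rangle_{\rho_t}\leq -(G(\rho_t)-G^*)-\tfrac{\lambda}{2}W_2^2\leq -\tfrac{\lambda}{2}W_2^2(\rho_t,\rho^*),
\]
since $G(\rho_t)\geq G^*$. Equivalently, this is the EVI form of the unperturbed flow.

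\textbf{Perturbation term.} By Cauchy--Schwarz in $L^2(\rho_t)$ and $\|\Id-T\|_{\rho_t}=W_2(\rho_t,\rho^*)$, followed by Young's inequality,
\[
\Bigl|\langle \Id-T,\zeta_u\rangle_{\rho_t}\Bigr|\leq W_2\,\sqrt{\gamma(\|u\|)}\leq \tfrac{\lambda}{4}W_2^2+\tfrac{1}{\lambda}\gamma(\|u\|).
\]
Combining the two terms,
\[
\dot V_u\leq -\tfrac{\lambda}{4}W_2^2(\rho_t,\rho^*)+\tfrac1\lambda\gamma(\|u_t\|),
\]
which is the sufficient condition of the Corollary. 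Theorem~\ref{thm:diss-lyap} then yields dISS; explicitly, Grönwall gives $W_2^2(\hat\rho_t,\rho^*)\leq e^{-\lambda t/2}W_2^2(\hat\rho_0,\rho^*)+\tfrac{4}{\lambda^2}\gamma(\|u\|_c)$.

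\textbf{Main obstacle.} The hard step is rigorously justifying the derivative formula for $W_2^2(\rho_t,\rho^*)$ when $G$ is non-smooth (e.g.\ entropy), where the velocity $\nabla\phi$ is only defined in the subdifferential sense. I would handle it by establishing the integrated (EVI-type) inequality $\tfrac12 W_2^2(\rho_{t+h},\rho^*)-\tfrac12 W_2^2(\rho_t,\rho^*)\leq\int_t^{t+h}\dots$ via the superdifferentiability of $W_2^2$ along absolutely continuous curves in~\cite{LA-NG-GS:08}. For generalized-geodesic convexity, I would take base $\nu=\rho_t$ so that the inequality reduces to the displacement case used above.
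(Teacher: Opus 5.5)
Your proposal is correct and matches the paper's proof step for step. You use the same Lyapunov functional $\frac12 W_2^2(\rho,\rho^*)$ and the same split of its derivative: the gradient term is handled by $\lambda$-convexity, discarding $G(\rho^*)-G(\rho)\le 0$, and the perturbation term by Cauchy--Schwarz and Young, where your choice $\epsilon=\lambda/2$ is one instance of the paper's free parameter $\epsilon\in(0,\lambda)$. Your extra care in justifying the derivative formula via \cite{LA-NG-GS:08} and your explicit Gr\"onwall bound go beyond what the paper writes but do not change the argument.
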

\begin{proof}
  We take as a Lyapunov functional $\frac{1}{2}W_2^2(\rho,\rho^*)$. Its derivative
  along the flow evaluates to:
  \begin{equation}\label{eq:convex-cont}
  \begin{aligned}
    &\frac{d}{dt}\frac{1}{2}W_2^2(\rho_u,\rho^*)
     =\int_\Omega\langle-(\nabla\phi+\zeta_u(\rho)),x-T_{\rho\to\rho^*}(x)\rangle\diff\rho \\
    &=\int_\Omega\langle\nabla\phi,T_{\rho\to\rho^*}(x)-x \rangle\diff\rho
    +\int_\Omega\langle\zeta_u(\rho),T_{\rho\to\rho^*}(x)-x \rangle\diff\rho.
  \end{aligned}
  \end{equation}
  For the first term we use the $\lambda$-convexity of the functional, if the functional is $\lambda$-displacement convex, we have:
  \begin{equation*}
    \int_\Omega\langle\nabla\phi,T_{\rho\to\rho^*}(x)-x \rangle\diff\rho\leq G(\rho^*)-G(\rho)-\frac{\lambda}{2}W_2^2(\rho,\rho^*).
  \end{equation*}
  As $G(\rho^*)$ is the minimizer we can upper bound the difference between the first two terms.
  For a generalized geodesically convex functional the previous expression can be rewritten by choosing
  as base measure $\rho^*$, and performing a change of variables:
  {\small
  \begin{equation*}
    \int_\Omega \langle\nabla\phi(T_{\rho^*\to\rho}),x-T_{\rho^*\to\rho}(x)\rangle\diff\rho^*\leq-\frac{\lambda}{2}W_{2,\rho^*}^2(\rho,\rho^*),
  \end{equation*}
  }
  which can be bounded by the standard distance.
  For the second term in~\eqref{eq:convex-cont} we use Cauchy and Young's inequality for $\lambda>\epsilon>0$:
  {\small
  \begin{equation*}
    \int_\Omega\langle\zeta_u(\rho),T_{\rho\to\rho^*}(x)-x \rangle\diff\rho\leq \frac{1}{2\epsilon}\int_\Omega|\zeta_u(\rho)|^2\diff\rho+\frac{\epsilon}{2}W_2^2(\rho,\rho^*).
  \end{equation*}
  }
  Putting both together we obtain the dISS bound:
  \begin{equation*}
    \frac{d}{dt}\frac{1}{2}W_2^2(\rho_u,\rho^*)\leq-\frac{(\lambda-\epsilon)}{2}W_2^2(\rho,\rho^*)+\frac{1}{2\epsilon}\gamma(\|u\|).
  \end{equation*}
  
\end{proof}

\begin{figure}[htbp]
    \centering
    \begin{tikzpicture}
        \begin{axis}[
            width=\linewidth,   
            height=4.5cm,
            axis x line=bottom,
            axis y line=none,
            ymin=0, ymax=1.0, 
            xmin=-6, xmax=6,   
            ticks=none, 
            xlabel={},
            ylabel={}, 
            axis line style={-stealth},
            clip=false,
            domain=-6:6,
            samples=150
        ]

        % 1. L^2 INTERPOLATION (Red)
        \addplot[
            color=red!80!black, 
            fill=red!20, 
            fill opacity=0.6,
            very thick
        ] {0.35 * exp(-0.5*(x+2)^2) + 0.35 * exp(-0.5*(x-2)^2)};
        
        \node[red!80!black, font=\small] at (axis cs: 0, 0.25) {$\rho_{L^2}$};

        % 2. WASSERSTEIN W_2 INTERPOLATION (Green)
        \addplot[
            color=green!50!black, 
            fill=green!20, 
            fill opacity=0.5,
            very thick
        ] {0.7 * exp(-0.5*(x)^2)};
        
        \node[green!50!black, font=\small] at (axis cs: 0, 0.8) {$\rho_{W_2}$};

        % 3. BASE DISTRIBUTION 1 (Blue Left)
        \addplot[
            color=blue!80!black, 
            fill=blue!20, 
            fill opacity=0.4,
            very thick
        ] {0.7 * exp(-0.5*(x+2)^2)};
        
        \node[blue!80!black, font=\small] at (axis cs: -2, 0.8) {$\rho_{0}$};

        % 4. BASE DISTRIBUTION 2 (Blue Right)
        \addplot[
            color=blue!80!black, 
            fill=blue!20, 
            fill opacity=0.4,
            very thick
        ] {0.7 * exp(-0.5*(x-2)^2)};
        
        \node[blue!80!black, font=\small] at (axis cs: 2, 0.8) {$\rho_{1}$};

        \end{axis}
    \end{tikzpicture}
    
    \caption{\small Comparison of Euclidean ($L^2$) and Wasserstein ($W_2$)
      interpolations of two overlapping Gaussian distributions
      ($\rho_0$ and $\rho_1$). The $L^2$ interpolation ($\rho_{L^2}$)
      linearly joints the densities. In contrast, the $W_2$
      interpolation ($\rho_{W_2})$ horizontally transports the mass,
      preserving the Gaussian structure.}
    \label{fig:interpolation}
\end{figure}
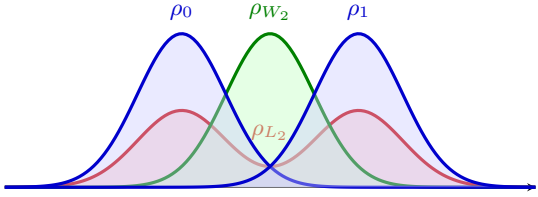

%\begin{remark}[l-smoothness]
%  The $l$-smoothness assumption in Proposition~\ref{prop:iss-cont},
%  may be too restrictive for general functionals. To
%  bridge this gap without losing the dISS guarantee, one can
%  alternatively consider the squared Wasserstein distance as the dISS
%  Lyapunov functional candidate.  This candidate trivially satisfies the
%  positivity bounds. Then, one only need to study under which
%  conditions its derivative under the gradient flow of the 
%  particular functional fulfills the required decay.  \oprocend
%\end{remark}

\subsection{Entropic Disturbances in Optimal Transport}
We present here applications of Proposition~\ref{prop:iss-cont}
stemming from swarm density control.

\begin{lemma}\longthmtitle{Continuity
    equation subject to additive entropic disturbance}
  Consider the advection-diffusion PDE:
\begin{equation}\label{eq:entr-dist}
    \partial_t\rho=\nabla\cdot(\rho\nabla\phi)+u\Delta\rho,
\end{equation}
where $u_t\in\real$ is a bounded input signal and $\phi$ is the first
variation of a functional $G$ satisfying the assumptions of
Proposition~\ref{prop:iss-cont}. Assume that there is a bound
for the norm of the logarithmic gradient of the density. That is:
  \begin{equation*}
    \|\zeta(\rho)\|_{L^2(\Omega;\diff \rho)}^2=\int_\Omega |\nabla \log(\rho)|^2 \diff\rho\leq M.
  \end{equation*}
  Then, the entropically perturbed gradient flow in
  \eqref{eq:entr-dist} is dISS.
\end{lemma}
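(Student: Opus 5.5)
The plan is to rewrite the diffusion term as a state-dependent perturbation of the velocity field. Then \eqref{eq:entr-dist} takes exactly the form \eqref{eq:pert-cont-grad-flow}, and Proposition~\ref{prop:iss-cont} can be invoked directly.

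\textbf{Rewriting the diffusion.} Using $\Delta\rho=\nabla\cdot(\nabla\rho)=\nabla\cdot(\rho\nabla\log\rho)$ on the support of $\rho$, valid for positive, sufficiently regular densities, we get
\begin{equation*}
\partial_t\rho=\nabla\cdot\big(\rho(\nabla\phi+u_t\nabla\log\rho)\big).
\end{equation*}
This is \eqref{eq:pert-cont-grad-flow} with $\zeta_{u}(\rho)=u\,\nabla\log\rho$. The no-flux boundary condition must still hold for the perturbed velocity. I would impose the Neumann condition $\nabla\rho\cdot n=0$ on $\partial\Omega$, consistent with the standing assumption that solutions exist and satisfy the boundary condition. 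Together with $\nabla\phi\cdot n=0$, this gives $\rho(\nabla\phi+\zeta_u)\cdot n=0$, so mass is conserved and $\rho_t$ stays in $\Ptwo$.

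\textbf{Bounding the perturbation.} By the assumed Fisher-information bound,
\begin{equation*}
\|\zeta_u(\rho)\|^2_{L^2(\Omega;\diff\rho)}=u^2\int_\Omega|\nabla\log\rho|^2\diff\rho\le M\,|u|^2 .
\end{equation*}
Hence $\gamma(r):=Mr^2\in\mathcal{K}_\infty$, which is the boundedness hypothesis of Proposition~\ref{prop:iss-cont}. Since $G$ satisfies the remaining assumptions by hypothesis, the proposition yields
\begin{equation*}
\dot G_u\le -\tfrac{\lambda^2}{2}W_2^2(\rho,\mathfrak{P}^*)+\tfrac{M}{2}|u_t|^2 .
\end{equation*}
The corollary to Theorem~\ref{thm:diss-lyap} then gives dISS, with the positivity bounds coming from quadratic growth and local $l$-smoothness. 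The shift-invariance hypothesis on $\mathcal{U}$ holds trivially for bounded real signals under the sup norm.

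\textbf{Main obstacle.} The delicate step is justifying the identity $\Delta\rho=\nabla\cdot(\rho\nabla\log\rho)$ and the chain rule $\dot G_u=D_{\Xi}G$ along the perturbed flow. Both need $\rho_t$ to be positive and smooth enough that $\nabla\log\rho_t$ lies in $L^2(\rho_t)$ uniformly in time; this is precisely the role of the assumption $\int|\nabla\log\rho|^2\diff\rho\le M$.

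For $u_t>0$, parabolic regularity makes $\rho_t$ positive and smooth for $t>0$, so the computation is rigorous there, and the initial time can be handled by continuity of $t\mapsto G(\rho_t)$ in $W_2$. A negative $u$ gives backward diffusion, which is ill-posed. I would therefore either restrict inputs to $u\ge0$ or read the statement as assuming, as the paper does throughout, that solutions exist under the stated bound. In either case only $|u|^2$ enters the estimate, so the sign does not affect the dISS gain.
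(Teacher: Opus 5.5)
Your proposal is correct and follows the paper's proof. You rewrite $u\Delta\rho=\nabla\cdot(\rho\,u\nabla\log\rho)$ so that \eqref{eq:entr-dist} takes the form \eqref{eq:pert-cont-grad-flow} with $\zeta_u(\rho)=u\nabla\log\rho$, bound $\|\zeta_u(\rho)\|^2_{L^2(\Omega;\diff\rho)}\le M|u|^2$ using the Fisher-information assumption, and invoke Proposition~\ref{prop:iss-cont} with $G$ as the dISS Lyapunov functional, exactly as the paper does (your added remarks on the no-flux boundary condition and on backward diffusion for $u<0$ go beyond the paper but do not change the argument).
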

%Here $\phi$ corresponds to the first variation of a functional $G$.
% Note how this equation corresponds to the mean field approximation of
% the stochastic particle system with isotropic noise of unknown
% covariance $\sqrt{2u}$:
% \begin{equation*}
%     dx=\nabla \phi(x) dt+\sqrt{2u} \Id dw,
% \end{equation*}

% begin{lemma}
%   Assume that the functional $G$ satisfies the conditions of Proposition~\ref{prop:iss-cont}, 
%   and that for all $t \geq 0$, the norm of its logarithmic gradient is bounded, that is:
%   \begin{equation*}
%     \|\zeta(\rho)\|_{L^2(\Omega;\diff \rho)}^2=\int_\Omega |\nabla \log(\rho_t)|^2 \diff\rho_t \leq M
%   \end{equation*}
%   then the entropically perturbed gradient flow in \eqref{eq:entr-pert} is dISS.
% \end{lemma}
\begin{proof}
  Observe that we can rewrite~\eqref{eq:entr-dist} as  the following
  perturbed gradient flow:
\begin{equation*}
  \partial_t\rho=\nabla·(\rho(\nabla\phi+u \zeta(\rho))),
  \quad \zeta(\rho):=\nabla \log(\rho).
\end{equation*}
% Note how the unknown covariance (the disturbance) in the stochastic particle system
% has been lifted to a disturbance at the distribution level.
In this way, the state dependency of the disturbance corresponds to
$\zeta(\rho)=\nabla\frac{\delta\mathcal{H}(\rho)}{\delta \rho}$, where
$\mathcal{H}(\rho)=\int_\Omega \log\rho \diff\rho$ is the entropy functional.
The result follows directly from Proposition~\ref{prop:iss-cont} by
taking the dISS Lyapunov functional candidate as $G(\rho)$, and using
the boundedness of the logarithmic gradient to bound the perturbation
term.
\end{proof}

\begin{remark}[Fisher Information Bound]\label{re:fisher}
  The condition in the previous lemma requires a bound on the term
  $I(\rho)=\int_\Omega |\nabla \log(\rho)|^2 \diff\rho$, the
  \textit{Fisher information integral}.  If $\rho\ll\vartheta$, the
  Fisher information reduces to
  $I(\rho)=\int_\Omega \frac{|\nabla \rho|^2 }{\rho}\diff \vartheta$.
  The bound holds if $\rho$ is strictly positive ($\rho(x)\geq m>0$)
  and $\nabla\rho\in L_2(\Omega)$ is bounded. On the other hand, if
  the first variation of the functional $G$ satisfies $\nabla^2 \phi \succeq \lambda\Id$,
  $I(\rho_t)$ is guaranteed to decrease exponentially along
  trajectories of~\eqref{eq:entr-dist}~{\cite[Thm. 24.2]{CV:08}}. 
  % \margin{is this lower bound
  % also guaranteed by the fact that $\Omega$ is convex? and from the
  % lower bound of $\rho$? (as we say after the proof of lemma
  % 6.4)?}\marginguillem{When G is the Wasserstein distance to a target, yes. But 
  % here we are saying for any functional, so depending on the functional the 
  % conditions change}
  Then, if the initial condition has bounded Fisher
  Information, it remains bounded $\forall t\geq 0$.\oprocend
\end{remark}

Observe that~\eqref{eq:entr-dist} is the mean field approximation of a
stochastic particle system with isotropic noise of unknown covariance
$\sqrt{2u}$.
%\begin{equation*}
%    dx=\nabla \phi(x) dt+\sqrt{2u} \Id dw,
%\end{equation*}
Therefore, the previous result implies that that steering a swarm via
optimal transport is robust to stochastic noise of unknown covariance
in the agents. This motivates another example
given by the entropic regularization of the optimal transport functional itself. 
\begin{lemma}[Regularized Optimal Transport]
  Consider the perturbed optimal transport functional:
  \begin{align*}
    W_{2,u}^2 (\mu,\nu) & = \inf_{\pi \in \Pi(\mu,\nu)}
                          \left[ \int_{\Omega \times \Omega} c(x,y) \diff \pi(x,y)
                          + u H(\pi)\right],\\
    H(\pi) & = \log \left[\frac{\diff \pi}{\diff \mu \diff \nu} \right],
  \end{align*}
  where $u$ is a bounded input signal, and let $G_u(\mu) = W_{2,u}^2(\mu,\rho^*)$, where $\rho^*$ is a
  target distribution. Define the associated gradient flow
  \begin{equation}\label{eq:perturbed-ot}
    \partial_t\rho
    =\nabla\cdot \left(\rho\nabla \phi_u\right),
  \end{equation}
  where $\phi_u= \frac{\delta G_u}{\delta \rho}$, and $\phi_0$ is the
  unperturbed potential.
  Assume that $\rho,\rho^*\ll \vartheta$, that $\Omega$ is a convex compact set
  and that $\rho,\rho^*$ are both upper and lower bounded for all
  $x\in\Omega$  ($0<m\leq\rho(x),\rho^*(x)\le M$).
  %Furthermore defining $\varphi(x)=\frac{1}{2}\|x\|^2-\phi_0(x)$,
  %assume \change{ A2) $\varphi\in C^2(\Omega)$ and
  %$\varphi^*\in C^{\alpha+1}(\Omega)$ with $\alpha>1$,\\
  %A3) $\mu \Id \preceq \nabla^2\varphi\preceq L\Id$ for
  %$\mu,L>0$.\\}
  Then, the gradient flow~\eqref{eq:perturbed-ot}
  %generated by the regularized optimal transport functional
  is dISS with respect to the unperturbed optimal transport flow.
  
  % , and let
  % $I_*(\rho,\rho^*)$ be the integral of the logarithmic gradient along
  % the Wasserstein geodesic $\{\rho_\tau\}_{\tau \in [0,1]}$, from
  % $\rho_0=\rho$ to $\rho_1=\rho^*$:
  % \begin{equation*}
  %   I_*(\rho,\rho^*)=\int_0^1\int_\Omega
  %   |\nabla \log(\rho_\tau)|^2 \diff\rho_\tau \diff \tau.
  % \end{equation*}
  % Then, if the transport maps are sufficiently smooth,
  % \margin{is this ensured with the strong convexity of the domain as
  % you point out below? then I would state it here} \marginguillem{not only
  % with that, thats why I dont give the conditions here}
  % and $I_*(\rho,\rho^*)$ is bounded along trajectories of
  % \eqref{eq:perturbed-ot}, \margin{that means that for every
  % $\rho_s$ trajectory of the perturbed trajectory, we have
  % $I(\rho_s,\rho^*)$ is bounded, right?}
  % \marginguillem{we need more than that, not only along the trajectory,
  % but along the geodesics departing from each point in the trajejctory}
\end{lemma}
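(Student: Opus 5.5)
The plan is to write the regularized flow \eqref{eq:perturbed-ot} as a perturbation of the unperturbed optimal transport gradient flow, in the form \eqref{eq:pert-cont-grad-flow}:
\begin{equation*}
  \partial_t\rho=\nabla\cdot\big(\rho(\nabla\phi_0+\zeta_u(\rho))\big),\qquad
  \zeta_u(\rho):=\nabla\phi_u-\nabla\phi_0 .
\end{equation*}
We would then apply either Proposition~\ref{prop:iss-cont} or its $\lambda$-convex counterpart, with $G_0(\rho)=\tfrac12 W_2^2(\rho,\rho^*)$, or $\tfrac12 W_2^2(\rho,\rho^*)$ itself, as the dISS Lyapunov functional. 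This leaves two ingredients to verify.
\begin{enumerate}
\item The unperturbed functional has the required structure.
\item The perturbation satisfies $\|\zeta_u(\rho)\|^2_{L^2(\rho)}\le\gamma(|u|)$ with $\gamma\in\mathcal{K}_\infty$, uniformly over the admissible set of densities bounded between $m$ and $M$.
\end{enumerate}

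For the first ingredient, recall that $\phi_0$ is the Kantorovich potential and $\nabla\phi_0=x-T_{\rho\to\rho^*}(x)$. Hence
\begin{equation*}
  \|\nabla\phi_0\|^2_{\rho}=W_2^2(\rho,\rho^*)=2\,(G_0(\rho)-G_0^*).
\end{equation*}
This gives gradient dominance with $\lambda=1$ directly. Quadratic growth with $\lambda=1$ and the upper bound are trivial, since $G_0$ is the squared distance. Local smoothness is not really needed, because the Lyapunov derivative computation only uses these identities. Concretely, $\dot G_0=-\int\nabla\phi_0\cdot(\nabla\phi_0+\zeta_u)\,\diff\rho$, and Young's inequality gives
\begin{equation*}
  \dot G_0\le-\tfrac12 W_2^2(\rho,\rho^*)+\tfrac12\|\zeta_u\|^2_{\rho}.
\end{equation*}
So, exactly as in the proof of Proposition~\ref{prop:iss-cont}, everything reduces to bounding $\zeta_u$. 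The derivative formula for $W_2^2$ along a continuity equation is the one already used in \eqref{eq:convex-cont}.

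The second ingredient is the main obstacle. Let $\varphi_u$ be the entropic (Schrödinger) potential with first variation $\phi_u$. We need a bound of the form $\|\nabla\varphi_u-\nabla\varphi_0\|_{L^2(\rho)}\le C\,\omega(u)$ with $\omega(0)=0$ and $\omega$ continuous and increasing. We would then extend $\omega$ to a $\mathcal{K}_\infty$ function, using the boundedness of $\nabla\phi_u$ on the compact $\Omega$ for large $u$, since $|\nabla\phi_u|\le\mathrm{diam}(\Omega)$ because $\nabla\phi_u$ is a barycentric displacement.

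To get this bound, we would first write the entropic potential through the barycentric map, $\nabla\phi_u(x)=x-\mathbb{E}_{\pi_u}[y\mid x]$. Then $\zeta_u(x)=T_{\rho\to\rho^*}(x)-\mathbb{E}_{\pi_u}[y\mid x]$. Next we would invoke known quantitative convergence results for entropic OT maps. Under our assumptions (convex compact $\Omega$, densities pinched between $m$ and $M$), Caffarelli-type regularity makes $T_{\rho\to\rho^*}$ Hölder/Lipschitz. Estimates of Pooladian–Niles-Weed or Carlier et al.\ type then give $\|\zeta_u\|^2_{L^2(\rho)}\lesssim u\log(1/u)+u$, or $O(u)$, with constants depending only on $m$, $M$ and $\Omega$. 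The resulting $\gamma$ is a genuine class-$\mathcal{K}$ function.

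The subtle point is uniformity along the trajectory: the constants must not degrade in time. So we must also argue that the bounds $m\le\rho_t\le M$ are preserved, or simply assume them as in the statement. Once this is done, we would conclude via the Corollary to Theorem~\ref{thm:diss-lyap}, using the differential inequality with $\chi(r)=r^2/2$.
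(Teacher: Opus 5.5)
Your proposal follows the same route as the paper. You rewrite the regularized flow as the unperturbed OT gradient flow plus $\zeta_u=\nabla\phi_u-\nabla\phi_0$, bound $\|\zeta_u\|^2_{L^2(\rho)}$ with a Pooladian--Niles-Weed-type entropic-map estimate (the paper quotes $u^2C$ from their Corollary~1), and conclude via Proposition~\ref{prop:iss-cont} with $W_2^2(\rho,\rho^*)$ as the dISS Lyapunov functional. Checking that proposition's hypotheses directly through $\|\nabla\phi_0\|_\rho^2=W_2^2(\rho,\rho^*)$, which makes $l$-smoothness unnecessary, is a cleaner touch than the paper's bare invocation.

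One caveat is shared with the paper, which explicitly assumes ``sufficiently smooth'' transport maps. The bounds $m\le\rho,\rho^*\le M$ alone give only H\"older maps via Caffarelli. So the quantitative bias estimate needs extra regularity, and its constant does not depend only on $m$, $M$ and $\Omega$ as you claim.
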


% with respect to the unperturbed flow ($u=0$). 
% where $I_0(\rho,\rho^*)$ is the integral of the logarithmic gradient along
% the geodesic between $\rho$ and $\rho^*$:
% \begin{equation*}
%   I_0(\rho,\rho^*)=\int_0^1\int_\Omega |\nabla \log(\rho_t)|^2 \diff\rho_t \diff t,
% \end{equation*}

\begin{proof}
  Although the entropic regularization is inside the Wassertein
  functional itself, we can equivalently write the perturbed flow as
  \begin{equation*}
    \partial_t\rho=\nabla\cdot(\rho(\nabla\phi+\zeta_u(\rho))),
  \end{equation*}
  where $\zeta_u(\rho)=\nabla \phi _u-\nabla \phi$.  Assuming that
  transport maps are sufficiently smooth---which holds if $\rho$ and
  $\rho^*$ are sufficiently smooth---the following bound is given
  in~{\cite[Corollary 1]{AAP-JN:21},} (with more details on the
    constant $C$):
  \begin{equation*}
    \|\nabla \phi(\rho)-\nabla \phi_u(\rho)\|_{L^2(\Omega;\diff \rho)}^2
    \le u^{2}C.
  \end{equation*}
  Thus, the result follows directly from
  Proposition~\ref{prop:iss-cont} by taking the dISS Lyapunov
  functional $W_2^2(\rho,\rho^*)$.
\end{proof}

%The previous regularization is widely used with Sinkhorn algorithms
%due to its computational tractability. 
%Regarding the feasibility of
%these assumptions, (A2) and (A3) are theoretically guaranteed provided
%the domain $\Omega$ is strictly convex and the densities $\rho$ and
%$\rho^*$ are strictly bounded away from zero and sufficiently smooth,
%more details in~\cite{AAP-JN:21}.
%Hölder continuous ($\rho, \rho^* \in C^\beta(\Omega)$ for $\beta > 0$),
%guarantees (A3), (A2) necessitates stronger smoothness

%\subsection{Sample Approximations}
The implementation of mean field algorithms can not be done exactly as
the number of agents is finite. Instead, agents employ density
approximations, which can also be understood as a type of perturbation
to the original flow. Here, we establish dISS properties of the flows
associated with Kernel Density Estimators (KDEs) and particle
approximations.

\paragraph{KDEs flow approximation}
The KDE approximation of a density for $N$ agents is defined as
$\rho^{h,N}(x; z)=\frac{1}{N}\sum_{i=1}^N K_h(x-x_i)$, where $h$
is the bandwidth of the kernels, and $z \in \Omega^N$ represents the
agent locations, $z = (x_i)$, with $x_i$ the position of agent~$i$.
%If $x_i \sim \rho^l$, it holds that
%\begin{equation*}
%  \lim_{h\to 0, N\to \infty} \rho^{h,N}=\rho^l.
%\end{equation*}
We assume that $\rho^{h,N}$ is the actual state of the system.  Let
$G$ be a functional on $\Ptwo$, and denote %its evaluation at
% $\rho=\rho^{h,N}$ as
$G^{h,N}(z)\coloneqq G(\rho^{h,N}(x;z))$.

To minimize $G(\rho)$, a reasonable agent control input is:
\begin{equation}\label{eq:input-kernel} 
  \dot x_i= -\partial _{x_i} G^{h,N}(z) := -\frac{1}{N}(\nabla \phi \ast K_h)(x_i),
\end{equation}
where $\ast$ is a shorthand notation which reduces to standard
convolution for integral functionals, 
and $\phi$ the first variation of $G$ at $\rho^{h,N}$, see~\cite{VK-SM:22}
for details.
% In order to be well-defined, $G(\rho)$ needs to be differentiable
% (in the Wasserstein sense), and $\nabla G^{h,N} \cdot n = 0$ needs
% to hold to prevent mass from leaking.  Restricting the bandwith
% appropriately and assuming $G^{h,N}(z)$ is $\alpha$-smooth in the
% $z\in \Omega^N$ variable guarantees the convergence of the KDE to a
% local minimum~\cite{VK-SM:22}.
Under~\eqref{eq:input-kernel}, the KDE evolves as:
\begin{equation} \label{eq:kde-flow}
  \partial_t \rho = \nabla \cdot (\rho \nabla\phi^{h,N}),
\end{equation}
where $\nabla\phi^{h,N}$ can be obtained by expressing
$\partial_t\rho^{h,N}=\frac{1}{N}\sum_{i=1}^N\nabla_{x_i}K_h(x-x_i)(-\dot x_i)$
as a continuity equation, yielding:
\begin{equation}\label{eq:nadaraya}
  \nabla\phi^{h,N}(x)=
  \frac{\sum_{j=1}^{N}\partial _{x_j} G^{h,N}(z) K_h(x-x_j)}{\sum_{j=1}^N K_h (x-x_j)}.
\end{equation}
At points where $\rho^{h,N}(x) \to 0$, this expression is understood
in the limit sense. Notice how equation~\eqref{eq:nadaraya} takes the
form of a Kernel regression estimator of the convoluted ideal velocity
field~\cite{MPW-MCJ:94}.  We aim to characterize the dISS properties
of~\eqref{eq:kde-flow}, with respect to the ideal trajectory
in~\eqref{eq:cont-grad-flow}.
% \begin{equation*}

% 	\partial_t\rho=-\nabla \cdot (\rho \nabla \phi),
% \end{equation*}

% We interpret~\eqref{eq:nadaraya} as a Nadaraya-Watson estimator,
% \margin{provide a reference} and compute its Mean Square Error (MSE),
% $\mathbb{E}[|\nabla\phi^{h,N}-\nabla\phi|^2]$, which can be split into
% the variance and bias terms.

\begin{lemma}[KDE-based Continuity Equation is dISS]
  Consider the Wasserstein gradient flow of a functional
  $G$~\eqref{eq:cont-grad-flow}, and its
  counterpart~\eqref{eq:kde-flow} using a KDE approximation.
  Assume that the kernel $K_h$ satisfies the following properties:\\
  A1) 
  %The bandwith $h$ and the number of agents are coupled by 
  $h = c N^{-\frac{1}{d+2}}$, guaranteeing proper convergence of
  $\rho^{h,N}$ to $\rho$ in the mean-field limit, and
  minimizing the squared error of the estimator~\eqref{eq:nadaraya}. \\
  A2) It has finite first and second-order moments { \small
    $\int_\Omega K_h(z)\|z\|\diff z \triangleq h\mu_1(K_h)$,
    $\int_\Omega K_h(z)\|z\|^2 \diff z \triangleq h^2 \mu_2(K_h)$.}
  \\
  Suppose $G$ satisfies the conditions of
  Proposition~\ref{prop:iss-cont}, and assume that its first variation
  $\nabla \phi$ is Lipschitz continuous with constant $L$.
  Then~\eqref{eq:kde-flow} is dISS.
  %In particular, the velocity field error is upper bounded by:
  %\begin{equation*}
  %  \|\zeta_{h,N}\|^2_{L_2(\rho^{h,N})}\leq 2L^2(\mu_2(K_h)+\mu_1^2(K_h))  N^{-\frac{2}{d+2}}.
  %\end{equation*}
\end{lemma}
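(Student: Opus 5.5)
The plan is to cast the KDE flow~\eqref{eq:kde-flow} as a perturbed gradient flow of the form~\eqref{eq:pert-cont-grad-flow} and then invoke Proposition~\ref{prop:iss-cont}. Write
\begin{equation*}
  \nabla\phi^{h,N}=\nabla\phi+\zeta_{h,N}(\rho),\qquad \zeta_{h,N}(\rho):=\nabla\phi^{h,N}-\nabla\phi ,
\end{equation*}
where $\phi$ is the first variation of $G$ at the current state $\rho=\rho^{h,N}$. The ``input'' is then the approximation level, for instance $u=h$ or, via A1), $u=N^{-1/(d+2)}$. Since $G$ already satisfies quadratic growth, gradient dominance and local $l$-smoothness, it suffices to bound $\|\zeta_{h,N}\|^2_{L^2(\rho)}$ by a $\mathcal{K}_\infty$ function of $h$. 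Taking $G$ itself as the dISS Lyapunov functional then reproduces the estimate of Proposition~\ref{prop:iss-cont},
\begin{equation*}
  \dot G_u\le-\tfrac12\lambda^2W_2^2(\rho,\mathfrak{P}^*)+\tfrac12\|\zeta_{h,N}\|^2_{L^2(\rho)} .
\end{equation*}

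The key step is the pointwise bound on $\zeta_{h,N}$. I would rewrite~\eqref{eq:nadaraya} as a weighted average: with weights $w_j(x)=K_h(x-x_j)/\sum_k K_h(x-x_k)$, which sum to one, we have $\nabla\phi^{h,N}(x)=\sum_j w_j(x)\,g(x_j)$. Here $g(x_j)$ is the convolved velocity $(\nabla\phi\ast K_h)(x_j)$, after absorbing the $1/N$ normalization into the time scale. Then
\begin{equation*}
  \zeta_{h,N}(x)=\sum_j w_j(x)\big[(\nabla\phi\ast K_h)(x_j)-\nabla\phi(x_j)\big]+\sum_j w_j(x)\big[\nabla\phi(x_j)-\nabla\phi(x)\big].
\end{equation*}
Lipschitz continuity of $\nabla\phi$ and A2) bound the first bracket by $L h\mu_1(K_h)$. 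The second term is bounded by $L\sum_j w_j(x)|x-x_j|$. Squaring with $(a+b)^2\le 2a^2+2b^2$ and applying Jensen to the convex combination gives
\begin{equation*}
  |\zeta_{h,N}(x)|^2\le 2L^2h^2\mu_1^2(K_h)+2L^2\sum_j w_j(x)|x-x_j|^2 .
\end{equation*}

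Integrating against $\rho^{h,N}=\frac1N\sum_j K_h(\cdot-x_j)$ makes the weights cancel: $\int\sum_j w_j|x-x_j|^2\,\diff\rho^{h,N}=\frac1N\sum_j\int K_h(x-x_j)|x-x_j|^2\diff x=h^2\mu_2(K_h)$. Hence
\begin{equation*}
  \|\zeta_{h,N}\|^2_{L^2(\rho^{h,N})}\le 2L^2\big(\mu_2(K_h)+\mu_1^2(K_h)\big)h^2=2L^2c^2\big(\mu_2+\mu_1^2\big)N^{-\frac{2}{d+2}},
\end{equation*}
where the last equality uses A1). This is a $\mathcal{K}_\infty$ function of $u=h$, which is exactly the boundedness hypothesis of Proposition~\ref{prop:iss-cont}, and dISS follows.

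I expect the main obstacle to be justifying the decomposition. Specifically, one must make precise that the Lipschitz constant $L$ of $\nabla\phi$ is uniform along the KDE trajectory, since $\phi$ depends on the state $\rho^{h,N}$. One must also handle the limit sense of~\eqref{eq:nadaraya} where $\rho^{h,N}\to0$, and reconcile the $\ast$ notation with genuine convolution for non-integral functionals. I would first treat integral functionals, where $\ast$ is an honest convolution. Boundary effects of $K_h$ near $\partial\Omega$ would be handled by assuming the no-flux condition, as the paper does elsewhere.
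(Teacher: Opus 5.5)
Your proposal is correct and follows essentially the same route as the paper. Both use the Nadaraya--Watson weights $w_i(x)=K_h(x-x_i)/(N\rho^{h,N}(x))$, add and subtract $\nabla\phi(x_i)$ with $|a+b|^2\le 2|a|^2+2|b|^2$, apply Jensen over the convex combination and the moment bounds from A2), and reach the same estimate $\|\zeta_{h,N}\|^2_{L^2(\rho^{h,N})}\le 2L^2(\mu_2(K_h)+\mu_1^2(K_h))h^2$, which is then fed into Proposition~\ref{prop:iss-cont}. The only difference is cosmetic: you split the error before applying Jensen, while the paper applies Jensen to the full $e_i(x)$ first and then splits; your remark about absorbing the $1/N$ factor of the agent law into the time scale covers a detail the paper passes over silently.
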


\begin{proof}
  To show dISS, we bound the $L_2(\rho^{h,N})$ norm of the
  perturbation $\zeta_{h,N} = \nabla\phi^{h,N}-\nabla\phi$.  Defining
  the convex combination coefficients
  $w_i(x)=\frac{K_h(x-x_i)}{N\rho^{h,N}(x)}$, we  write:
  \vspace{-1.5ex}
  \begin{equation*}
    \zeta_{h,N}=\sum_{i=1}^{N}w_i(x) e_i(x), %((\nabla\phi\ast K_h)(x_i)-\nabla\phi(x)).
  \end{equation*}
  where $e_i(x)=(\nabla\phi\ast K_h)(x_i)-\nabla\phi(x)$. Then,
  applying Jensen's inequality,
  $(\sum_i w_i e_i)^2 \le \sum_i w_i e_i^2$, and integrating
  with respect to $\rho^{h,N}$, yields:
  \begin{equation*}
    \|\zeta_{h,N}\|^2_{L_2(\rho^{h,N})}
    \leq \frac{1}{N}\sum_{i=1}^{N}\int_\Omega K_h(x-x_i) |e_i|^2 \diff x.
  \end{equation*}
  We bound the integrand by adding and subtracting $\nabla\phi(x_i)$
  to the term $e_i(x)$ and applying
  $\|a+b\|^2 \leq 2(\|a\|^2+\|b\|^2)$.  Because $\nabla \phi$ is
  $L$-Lipschitz, the first resulting term is bounded by the Kernel's
  second moment (A2):
  \begin{equation*}
    \int_\Omega K_h(x-x_i)|\nabla\phi(x_i)-\nabla\phi(x)|^2
    \diff x \leq L^2 h^2 \mu_2(K_h).
  \end{equation*} 
  The second term can be directly bounded by the Kernel's first moment
  (A2), giving:
  {\small
\begin{align*}
    |(\nabla\phi \ast K_h)(x_i) & - \nabla\phi(x_i)|^2 \int_\Omega K_h(x-x_i) \diff x \leq (L h \mu_1(K_h))^2.
          \end{align*}}
     
{\small
  \begin{align*}
    | (\nabla\phi \ast K_h)(x_i) & - \nabla\phi(x_i) |\\
                                 &  = | \int_\Omega (\nabla \phi(x) - \nabla \phi (x_i))K_h(x-x_i) \diff x |\\
                                 & \le L \int_\Omega |x-x_i| K_h(x-x_i) \diff x \\
    & = L h \mu_1 (K_h)
  \end{align*}
}
Combining these two bounds yields:
  \begin{equation*}
    \|\zeta_{h,N}\|^2_{L_2(\rho^{h,N})}\leq 2L^2(\mu_2(K_h)+\mu_1^2(K_h)) h^2.
  \end{equation*}
  The final bound in terms of $N$ follows immediately from the
  coupling $h = c N^{-\frac{1}{d+2}}$ in (A1).
\end{proof}

%\change{  The Variance is bounded as follows:
%    \begin{equation*}
%      \mathbb{E}[|\nabla\phi^{h,N} -\mathbb{E}[\nabla\phi^{h,N}]|^2]\leq C_1\frac{h^ 2}{Nh^d}+o(\frac{h^2}{Nh^d}),
%    \end{equation*}
%    while the bias can be split between the bias of the Nadaraya-Watson estimator~\cite{MPW-MCJ:94}
%    and the bias of the convolution:
%    \begin{equation*}
%      |\mathbb{E}[\nabla\phi^{h,N}]-\nabla\phi|^2\leq (C_2 h^2+C_3\frac{1}{Nh^d}+o(h^2)+o(\frac{1}{Nh^d}))^2.
%    \end{equation*}
%    Assuming that $h<1$ and $\frac{1}{Nh^d}<1$, defining
%    $u=(h^2,\frac{1}{Nh^d})$, and integrating over the domain
%    $\Omega$ we obtain the following bound:
%    \begin{align*}
%      \|\nabla\phi^{h,N}-\nabla\phi\|_{L^2(\Omega)}^2=\int_\Omega \mathbb{E}[|\nabla\phi^{h,N}-\nabla\phi|^2]\diff\rho\\
%      \leq u^TQu+M\|u\|_2^3=\gamma(\|u\|_2),
       %      \end{align*}}

  \paragraph{Semi-discrete Optimal Transport}
  An alternative to~\eqref{eq:input-kernel} relies on the approximation of the density via an empirical
  distribution $\rho^N=\frac{1}{N}\sum_{i=1}^N \delta_{x_i}$, which
  %results in the perturbed flow:
  %\begin{equation}\label{eq:delta-flow}
  %  \partial_t \rho^N=-\nabla \cdot (\rho^N v_N),
  %\end{equation}
  is steered by a velocity vector $v_N$ approximating the gradient flow of a certain
  functional.
  As in the previous case, there is some inherent error
  when we restrict our flow to the space of empirical measures due
  to the impossibility of the empirical distribution to match an absolutely continuous target.
  %Note that while this expression is written as a continuity
  %equation, in reality it reduces to a system of ODEs.
  We focus in this subsection on the particular functional given by
  $W_2(\rho^N,\rho^*)$ where $\rho^*\ll\vartheta$, known as
  Semi-Discrete Optimal Transport (SD-OT).  Note that this functional
  can be rewritten as follows:
  \begin{equation}\label{eq:sdot}
    W_2^2(\rho^N,\rho^*)=\sum_{i=1}^N \int_{W_i} |x-x_i|^2 \diff\rho^*,
  \end{equation}
  where $\{\mathcal{W}_i\}_{i=1}^N$ are the Voronoi-Laguerre cells,
  which define the following partition of $\Omega$:
  \begin{equation*}
    \mathcal{W}_i=\setdef{x\in \Omega}{ |x-x_i|^2-w_i\leq |x-x_j^^2|-w_j,\quad\forall j\neq i},
  \end{equation*}
  with $(w_i)$ being such that the mass of each cell is equal,
  $\rho^*(\mathcal{W}_i)=\frac{1}{N}\quad\forall i\in\{1,\dots,N\}$.
  The SD-OT particle flow is defined through the following input:
  \begin{equation}\label{eq:sdot-flow}
    \dot{x}_i=-\frac{N}{2}\nabla_{x_i}W_2^2(\rho^N,\rho^*)=-(x_i-N\int_{\mathcal{W}_i}x\diff\rho^*),
  \end{equation}
  we define the centroid of the Voronoi-Laguerre cell as
  $c_{\mathcal{W}_i}\coloneq N \int_{\mathcal{W}_i}x\diff\rho^*$.  Note
  here that even though the cells depend on the agent location, its
  effect on the derivative vanishes, as noted
  in~\cite{JC-SM-FB:05-esaim}. The stability of this particle flow is
  characterized via the following lemma.

  \begin{lemma}\label{le:dISS-semi-discrete}
    Consider an empirical distribution $\rho^N$ evolving under the 
    SD-OT flow in~\eqref{eq:sdot-flow}.
    Assume that the support of the target distribution
    is bounded. Then, the flow is dISS with respect to the target $\rho^*$,
    and the ultimate dISS
    bound is proportional to $N^{-2/d}$ for sufficiently large $N$.
  \end{lemma}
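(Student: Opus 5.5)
We use $V(\rho^N)=\frac12 W_2^2(\rho^N,\rho^*)$ as the dISS Lyapunov functional, evaluated through the semi-discrete representation~\eqref{eq:sdot}. Positivity and distance equivalence~\eqref{eq:lyap-bounds} then hold trivially, with $\psi_1=\psi_2=\frac12 r^2$ and $\mathfrak{P}^*=\{\rho^*\}$. We interpret the ``input'' as the intrinsic quantization floor $u_N\coloneq \min_{z\in\Omega^N} W_2(\rho^N(z),\rho^*)$, which vanishes as $N\to\infty$. The claim then reduces to a decay estimate of the form of the Corollary after Theorem~\ref{thm:diss-lyap}, namely
\begin{equation*}
\dot V\le -c\,V+\gamma(u_N).
\end{equation*}

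\textbf{Step 1 (derivative of $V$).} Since the variation of the Laguerre cells does not contribute to the gradient (as noted after~\eqref{eq:sdot-flow}), we have $\nabla_{x_i}W_2^2=\frac{2}{N}(x_i-c_{\mathcal{W}_i})$. Along~\eqref{eq:sdot-flow} this gives
\begin{equation*}
\dot V=-\frac1N\sum_{i=1}^N |x_i-c_{\mathcal{W}_i}|^2 .
\end{equation*}

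\textbf{Step 2 (splitting $V$).} Using the parallel-axis identity on each cell, with $\rho^*(\mathcal{W}_i)=1/N$, we get
\begin{equation*}
W_2^2(\rho^N,\rho^*)=\sum_i\int_{\mathcal{W}_i}|x-c_{\mathcal{W}_i}|^2\diff\rho^*+\frac1N\sum_i|x_i-c_{\mathcal{W}_i}|^2 .
\end{equation*}
Hence $\dot V=-(W_2^2(\rho^N,\rho^*)-E(z))$, where $E(z)$ is the within-cell variance. This yields
\begin{equation*}
\dot V\le -2V+E(z).
\end{equation*}
This is exactly the Corollary's form with $\chi(r)=r^2$, provided $E(z)$ can be bounded by a quantity depending only on $N$ and the support of $\rho^*$.

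\textbf{Step 3 (bounding $E$) — the main obstacle.} The Laguerre cells are not arbitrary: they are the cells of the optimal map onto the equal-mass empirical measure. A naive bound only gives $E\le \mathrm{diam}(\text{supp}\rho^*)^2$, which yields dISS but not the $N^{-2/d}$ rate. For the rate, we first use $E(z)\le W_2^2(\rho^N,\rho^*)$, so the trajectory stays bounded and the state is asymptotically attracted to the set of centroidal configurations $x_i=c_{\mathcal{W}_i}$. At such configurations $E=W_2^2$, and it remains to bound the cell variances of the equal-mass partition of the bounded support. Here we invoke the quantization estimates for uniform (equal-weight) quantization of a compactly supported density, which give $W_2^2\lesssim N^{-2/d}$ for large $N$ (Zador/Bourne–Schmitzer–Wirth type bounds, under $\rho^*\ll\vartheta$). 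What we would try is to show that $\sup_t E(z_t)\le C\,N^{-2/d}$ once $t$ is large. To do this, we would use that each cell has mass $1/N$ and that the Laguerre diagram of a near-centroidal configuration has cells of diameter $O(N^{-1/d})$. We would verify this by a covering argument: a cell of large diameter and mass $1/N$ would contradict the optimality condition of the weights.

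\textbf{Step 4 (conclusion).} With $E\le\gamma(u_N)$, where $\gamma(u_N)\propto N^{-2/d}$, Grönwall's inequality gives
\begin{equation*}
W_2^2(\rho^N_t,\rho^*)\le e^{-2t}W_2^2(\rho^N_0,\rho^*)+\tfrac12\sup_s E(z_s).
\end{equation*}
This is the dISS estimate with $\beta(r,t)=re^{-t}$. Its ultimate bound is of order $N^{-2/d}$, as stated in Lemma~\ref{le:dISS-semi-discrete}.
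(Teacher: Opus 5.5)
Your Steps 1, 2 and 4 follow the paper's argument. The paper uses $W_2^2$ rather than $\frac12W_2^2$, arrives at $\frac{d}{dt}W_2^2=-2W_2^2+2E$, and reads off the ultimate bound. The problem is Step 3, and it is a genuine gap. You claim that near-centroidal configurations have Laguerre cells of diameter $O(N^{-1/d})$, hence $E\lesssim N^{-2/d}$. This is false, so no covering argument can prove it.

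Take $\rho^*$ uniform on $[0,1]^2$ and $x_i=((i-\frac12)/N,\frac12)$ for $i=1,\dots,N$. With all weights equal, the Laguerre cells are the vertical strips $[(i-1)/N,i/N]\times[0,1]$, each of mass $1/N$. These are therefore the optimal equal-mass cells, and their centroids are the $x_i$ themselves. So this configuration is an equilibrium of \eqref{eq:sdot-flow} for every $N$, with $W_2^2(\rho^N,\rho^*)=E=\frac{1}{12}(1+N^{-2})$. This does not decay like $N^{-2/d}=N^{-1}$.

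Quantization estimates (Zador, Bourne--Schmitzer--Wirth) control the \emph{minimum} of $z\mapsto W_2^2(\rho^N(z),\rho^*)$, i.e.\ your $u_N^2$. They say nothing about $E$ at the critical points the flow may converge to. Hence your key inequality $E(z_t)\le\gamma(u_N)$ with $\gamma(u_N)\propto N^{-2/d}$ cannot hold along all trajectories.

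What survives is dISS with $E(z_t)$ itself taken as the input, which is how the paper proceeds. Since $E\le D^2$ with $D=\text{diam}(\text{supp}(\rho^*))$, Gr\"onwall already gives dISS with a bounded input. The $N^{-2/d}$ rate needs an extra hypothesis about where the trajectory ends up. The paper gets it by combining Zador's bound for optimally placed particles with the assertion that, for large $N$, the flow approaches the global minimizer. Alternatively, $W_2^2$ is nonincreasing along the flow and $E\le W_2^2$, so an initialization with $W_2^2(\rho^N_0,\rho^*)\lesssim N^{-2/d}$ keeps $E\lesssim N^{-2/d}$ for all $t$.

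Finally, your Gr\"onwall step drops a factor of two. From $\dot V\le-2V+\sup_sE(z_s)$ with $V=\frac12W_2^2$ you get $W_2^2(\rho^N_t,\rho^*)\le e^{-2t}W_2^2(\rho^N_0,\rho^*)+\sup_sE(z_s)$. A bound with $\frac12\sup_sE$ is impossible at any centroidal equilibrium, where $W_2^2=E$.
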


  \begin{proof}
    We define our Lyapunov function as the SD-OT functional.
    Taking the derivative along the continuous-time trajectories $\dot{x}_i$ yields:
    \begin{align}\label{eq:sdot-flow-decay}
      \frac{d}{dt} W_2^2(\rho^N, \rho^*) = \sum_{i=1}^{N}\nabla_{x_i}W_2^2(\rho^N,\rho^*) \dot{x}_i=\\
      \frac{2}{N} \sum_{i=1}^N \langle x_i - c_{\mathcal{W}_i}, \dot{x}_i \rangle \nonumber
      = -\frac{2}{N} \sum_{i=1}^N |x_i - c_{\mathcal{W}_i}|^2.
    \end{align}
    Expression~\eqref{eq:sdot-flow-decay} is related to the SD-OT functional in~\eqref{eq:sdot},
    which can be rewritten by adding and subtracting
    the centroid $c_{\mathcal{W}_i}$ inside the norm and squaring the sum:
    
    {\small
    \begin{align*}
      W_2^2(\rho^N, \rho^*)=\sum_{i=1}^N(|x_i - c_{\mathcal{W}_i}|^2 \int_{\mathcal{W}_i} \diff\rho^* \\
      + \int_{\mathcal{W}_i} |c_{\mathcal{W}_i} - x|^2 \diff\rho^*
      + 2 \left\langle x_i - c_{\mathcal{W}_i}, \int_{\mathcal{W}_i} (c_{\mathcal{W}_i} - x) \diff\rho^*\right\rangle).
    \end{align*}
    }
    
    Since $c_{\mathcal{W}_i}$ is the centroid of $\mathcal{W}_i$ the
    last term vanishes. Noting that the mass of each optimal transport cell is uniform ($\int_{\mathcal{W}_i} \diff\rho^* = 1/N$), 
    the SD-OT functional decomposes perfectly into:
    \begin{equation*}
      W_2^2(\rho^N, \rho^*) = \frac{1}{N} \sum_{i=1}^N |x_i - c_{\mathcal{W}_i}|^2 + \sum_{i=1}^N \int_{\mathcal{W}_i} |c_{\mathcal{W}_i} - x|^2 \diff\rho^*.
    \end{equation*}
    We identify the second term as the intrinsic error $u(t)$, which is bounded because the domain of $\rho^*$ is bounded. 
    Then, we can write the Lyapunov decay~\eqref{eq:sdot-flow-decay} in terms of the SD-OT functional and the error.
    \begin{equation*}
      \frac{d}{dt} W_2^2(\rho^N, \rho^*) = -2 W_2^2(\rho^N, \rho^*) + 2u(t).
    \end{equation*}
  The ultimate dISS bound can be obtained as:
  \begin{equation*}
    \limsup_{t\to\infty} W_2^2(\rho^N, \rho^*)  \leq \frac{1}{2}\limsup_{t\to\infty} (2u(t))\leq C N^{-2/d}.
  \end{equation*}
  Where we used that for $N$ particles optimally placed in $\rho^*$ the error term is bounded by 
  $C_d D^2 N^{-2/d}$, where $D = \text{diam}(\text{supp}(\rho^*))$ and $C_d$ 
  is a constant depending on the dimension~\cite{PZ:82}, and that for sufficiently large $N$, the 
  algorithm assymtoticaly approaches the global minima.
\end{proof}

\section{Numerical Examples}
\subsection{KDE Sinkhorn algorithm}

To illustrate the results, we implement a KDE-based swarm control
using the regularized Wasserstein distance.  The velocity field
$\nabla\phi_u$ is obtained via a convolutional Sinkhorn
algorithm~\cite{JS:15} in a centralized manner, after which the
control laws for each agent follow~\eqref{eq:input-kernel}.
 % The results
% are shown in Figure~\ref{fig:kde-plots} where the dependence on the
% final distance is plotted against different regularization values and
% different number of agents. 
The results in Figures~\ref{fig:kde-plots}(a), (b), and (c) confirm how the
distance of the steady state distribution to a target distribution
increase with the strength of the disturbance, while
Figure~\ref{fig:kde-plots} (d) shows how it decreases as the
number of agents increases. 
\begin{figure}[h!]
\centering

% --- Top Row ---
\begin{subfigure}{0.48\columnwidth}
    \hspace*{0.35cm}
    \includegraphics[width=\linewidth]{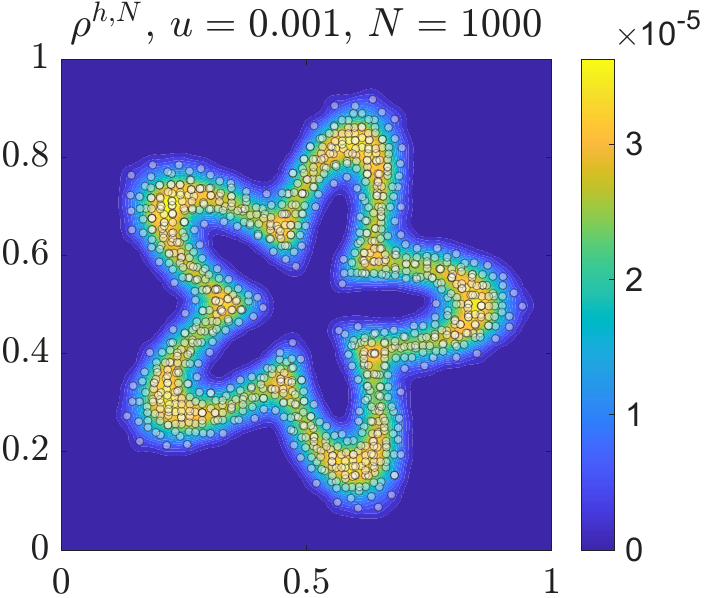}
    \caption{}
\end{subfigure}
\hfill % Dynamically fills space between figures
\begin{subfigure}{0.48\columnwidth}
    \hspace*{0.2cm}
    \includegraphics[width=\linewidth]{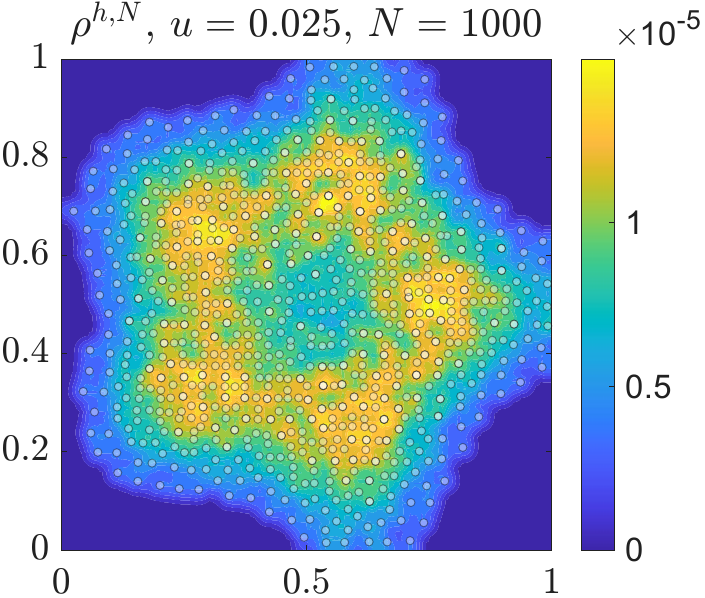}
    \caption{}
\end{subfigure}

\vspace{0.5em}

% --- Bottom Row ---
\begin{subfigure}{0.48\columnwidth}
    \centering
    \includegraphics[width=\linewidth]{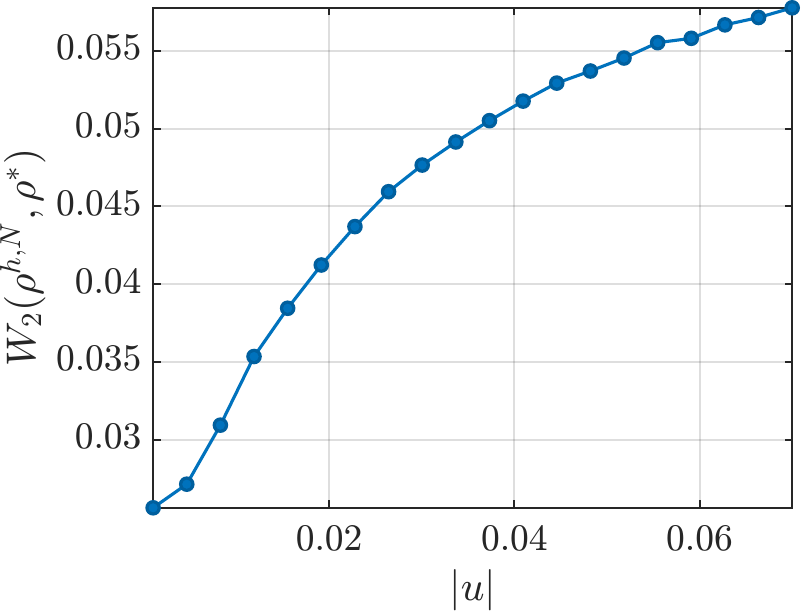}
    \caption{}
\end{subfigure}
\hfill % Matches the dynamic spacing of the top row perfectly
\begin{subfigure}{0.48\columnwidth}
    \centering
    \includegraphics[width=\linewidth]{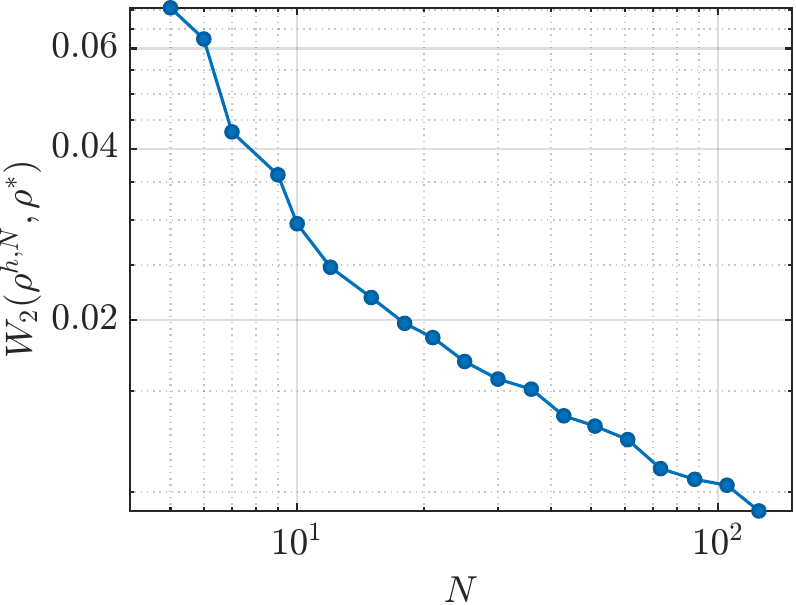}
    \caption{}
\end{subfigure}

\vspace{0.5em}
% --- Bottom Row: Single Side-by-Side Evolution Figure ---
\begin{subfigure}{0.48\columnwidth}
    \includegraphics[width=\linewidth]{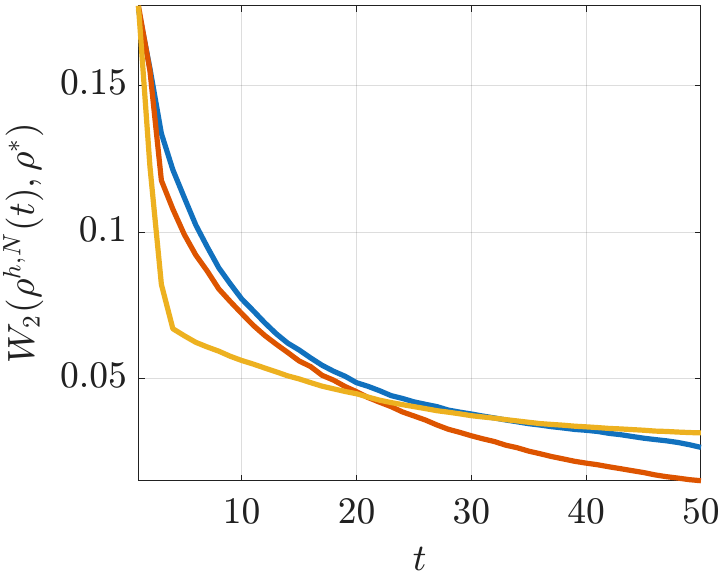}
    \caption{}
\end{subfigure}
\hfill % Matches the dynamic spacing of the top row perfectly
\begin{subfigure}{0.48\columnwidth}
    \centering
    \includegraphics[width=\linewidth]{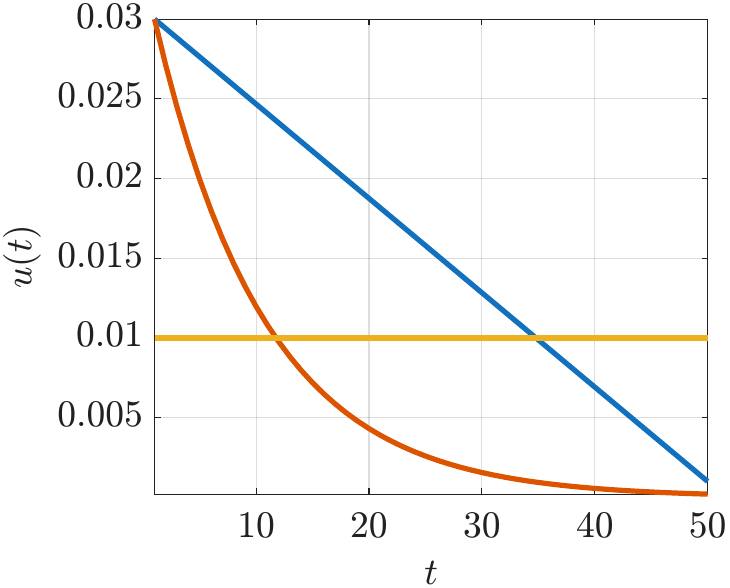}
    \caption{}
\end{subfigure}

\caption{\small \textbf{(a), (b).} Final agent positions for the regularized
  KDE flow with $N=1000$ agents and $u=1\cdot 10^{-3}$, $u=0.025$
  respectively. \textbf{(c), (d).} Final $W_2(\rho^{h,N},\rho^*)$
  distance with varying regularizing parameter $u$, and varying number
  of agents $N$.\textbf{(e), (f).} Evolution of the $W_2(\rho^{h,N},\rho^*)$ distance
  for three different signals $u(t)$. }
\label{fig:kde-plots}
\end{figure}
\section{Conclusions}
This paper introduces a framework to analyse disturbances in
large-scale multi-agent systems, via the notion of distributional
ISS. With this, we are able to study the robustness of Wasserstein
gradient flows of certain relevant functionals, such as the
Wasserstein distance to a target distribution. Moreover, we
illustrated the notion with different examples motivated by swarm
control, such as entropic disturbances and sample approximations, and
we provided simulation results confirming the bounds. Current work
includes the expansion of the scope of this work to other discrete
algorithms and dynamic systems in probability spaces.

\section*{ACKNOWLEDGEMENTS}
\textit{Gemini Pro was used to assist with manuscript graphics,
code troubleshooting, and literature search.}

\bibliographystyle{unsrt}
\bibliography{alias,SMD-add,SM,JC}
\begin{appendices}
\section{Proof of Lemma~\ref{le:convex}}\label{sec:app-le} We
    proof the result for an absolutely continuous $\rho^*$ (minimizer
    of $F(\rho)$) and $\rho_0$.  To obtain quadratic growth we
    evaluate the convexity condition at the minimizer, then the
    gradient of the first variation evaluated at the minimizer
    vanishes and we
    obtain the desired inequality.\\
    For gradient dominance, we have:
\begin{equation*}
F(\rho_0)-F(\rho^*)\leq -\langle\nabla\frac{\delta F}{\delta \rho}|_{\rho_0},v\rangle_{\rho_0}
-\frac{\lambda}{2} W_2^2(\rho_0,\rho^*).
\end{equation*}
By Cauchy Schwarz:
\begin{equation*}
  -\langle\nabla \frac{\delta F}{\delta \rho}|_{\rho_0},v\rangle_{\rho_0}\leq \|\nabla \frac{\delta F}{\delta \rho}|_{\rho_0}\|_{\rho_0}\|v\|_{\rho_0},
\end{equation*}
As $v=T_{\rho_0\to\rho^*}(x)-x$, its weighted norm reduces to the Wasserstein distance. Then:
\begin{equation*}
  F(\rho_0)-F(\rho^*)\leq \|\nabla \frac{\delta F}{\delta \rho}|_{\rho_0}\|_{\rho_0}W_2(\rho_0,\rho^*)-\frac{\lambda}{2} W_2^2(\rho_0,\rho^*).
\end{equation*}
The right hand side is a quadratic expression that is upper bounded as follows:
\begin{equation*}
  \|\nabla \frac{\delta F}{\delta \rho}|_{\rho_0}\|_{\rho_0}W_2(\rho_0,\rho^*)
  -\frac{\lambda}{2} W_2^2(\rho_0,\rho^*)\leq   \|\nabla \frac{\delta F}{\delta \rho}|_{\rho_0}\|_{\rho_0}^2 \frac{1}{2\lambda}.
\end{equation*}
Rearranging yields gradient dominance.

\section{Proof of Theorem~\ref{thm:diss-lyap}}\label{sec:app-proof-lyap}

Take a control input $u_t$ and define the set:
  \begin{equation*}
    I = \left\{ \rho \in \Ptwo\,|\,\ V(\rho) \leq \psi_2 \circ \chi(\|u\|_c) \right\},
  \end{equation*}
   Assume that $\rho_0\in I$,
  but at some time $t$, $\rho_t\notin I$.  As solutions
  to~\eqref{eq:cont-flow} are absolutely continuous, they are also
  continuous wrt the $W_2$ metric and there exists a time $s$ such
  that $\rho_{s}$ must lie on the boundary of $I$, meaning
  $V(\rho_{s}) = \psi_2 \circ \chi(\|u\|_c)$. In that case, the
  $\mathcal{K}_\infty$ property of $\psi_2$ implies
  $W_2(\rho_s, \mathfrak{P}^*) \geq \chi(\|u\|_c)$. Now consider the
  shifted input $\tilde{u}(\tau) = u(\tau + s)$ that drives the system
  for $\tau \geq 0$. The fact $\|\tilde{u}\|_c \leq \|u\|_c$ ensures
  $W_2(\rho_s, \mathfrak{P}^*) \geq \chi(\|\tilde{u}\|_c)$, which
  activates the decay condition
  $\dot{V}_{\tilde{u}}(\rho_s) \leq -\alpha(W_2(\rho_s,
  \mathfrak{P}^*)) < 0$. This strict decrease implies that $V(\rho_t)$
  must immediately decline for $t > s$, contradicting the assumption
  that $s$ is the first escape time (as $\dot{V}<0$, it cannot escape
  the set).

  If $\rho_0\notin I$, the trajectory satisfies the differential
  inequality:
  \begin{equation*}
    \dot{V}(\rho_t) \leq -\alpha(W_2(\rho_t, \mathfrak{P}^*))
    \leq -\alpha \circ \psi_2^{-1}(V(\rho_t)),
  \end{equation*}
  where the second inequality follows from the upper bound
  $V(\rho) \leq \psi_2(W_2(\rho, \mathfrak{P}^*))$.  From the
  comparison principle for $V$, which applies for real-valued
  functionals, there exists $\tilde{\beta} \in \mathcal{KL}$ such
  that:
  \begin{equation*}
    V(\rho_t) \leq \tilde{\beta}(V(\rho_0), t), \quad t \geq 0.
  \end{equation*}
  Consequently, the Wasserstein distance decays according to:
  \begin{align} \label{eq:kl-bound} W_2(\rho_t, \mathfrak{P}^*) \leq
    \beta(W_2(\rho_0, \mathfrak{P}^*), t), \quad \forall t \text{ s.t. } \rho_t \notin I,
  \end{align}
  where
  $\beta(r, t) = \psi_1^{-1} \circ \tilde{\beta}(\psi_2(r), t)$
  for all $r, t \geq 0$. From the properties of $\mathcal{KL}$
  functions, and continuity of $\rho_t$ wrt~$W_2$, 
  \begin{equation*}
    t_1 := \inf_{t \geq 0} \{\rho_t =\rho(t,\rho_0,u) \in I\} < \infty.
  \end{equation*}By the
  invariance of $I$, for all $t > t_1$ we have:
  \begin{equation} \label{eq:iss-bound}
    W_2(\rho_t, \mathfrak{P}^*) \leq \gamma(\|u\|_c),
  \end{equation}
  where
  $\gamma = \psi_1^{-1} \circ \psi_2 \circ \chi \in
  \mathcal{K}$. Combining \eqref{eq:kl-bound} and \eqref{eq:iss-bound}
  yields the distributional ISS estimate:
  \begin{equation*}
    W_2(\rho_t, \mathfrak{P}^*) \leq \max\left\{ \beta(W_2(\rho_0, \mathfrak{P}^*), t),  \gamma(\|u\|_c) \right\}, \quad t \geq 0.
  \end{equation*}

\section{Proof of Proposition\ref{prop:nss-diss}}\label{sec:app-proof-nss}

  \textbf{($\text{dISS} \implies \text{NSS}$):} We substitute the
  indentity~\eqref{eq:exp-w2} into the dISS condition, which
  simplifies to
  \begin{equation*}
    \mathbb{E}_{x_t\sim\rho_t}[\dist^2(x_t,M)]\leq (\beta(\dist(x_0,M),t)+\gamma(\|u\|_c))^2,
  \end{equation*}
  where we take $\rho_0=\delta_{x_0}$. Applying Markov's
  inequality:
  \begin{align*}
    P\{\dist(x_t,M)\geq R\}=P(\dist^2(x_t,M)\geq R^2)
    \leq \\\frac{\mathbb{E}[\dist^2(x_t,M)]}{R^2}\leq\frac{(\beta(\dist(x_0,M),t)+\gamma(\|u\|_c))^2}{R^2}.
  \end{align*}
  Using the value
  $R=\frac{1}{\sqrt{\epsilon}}(\beta(\dist(x_0,M),t)+\gamma(\|u\|_c))$,
  for any value $\epsilon\in(0,1)$ gives the desired result:
  \begin{equation*}
    P\{\dist(x_t,M)\geq \beta_\epsilon(\dist(x_0,M),t)+\gamma_\epsilon(\|u\|_c)\}\leq \epsilon,
  \end{equation*}
  where $\gamma_\epsilon=\gamma/\sqrt{\epsilon}$,
    $\beta_\epsilon =\beta/\sqrt{\epsilon}$.  Using the size function
  $w_M(x)\coloneqq\dist(x,M)$, we recover the NSS formulation. 

  \textbf{($\text{NSS} \implies \text{dISS}$):} In this case, for
  simplicity, we reduce the target to the origin, which in
  distributional space is a Dirac delta $\PM=\{\delta_0\}$, and we
  take as size function $w_M(x)=|x|^2$.  The result can be extended to
  more general sets and size functions, in a similar way as for the
  ISS case.  First, apply the NSS condition for the particular
  case of no covariance disturbance. As the equation is
  deterministic, one obtains a stability condition for the
  unperturbed flow, that is:
\begin{equation*}
  \mathbb{P}\{|x_t|^2\leq \beta(|x_0|^2,t)\}=1.
\end{equation*}
That is, we have a global $\mathcal{KL}$ decay of the state. Under
common regularity conditions, one can apply a converse Lyapunov
theorem for the deterministic part of the SDE
in~\eqref{eq:flow-stochastic}, which guarantees the existence of a
smooth, strict Lyapunov function $V: \Omega \to \mathbb{R}_{\geq 0}$
and class $\mathcal{K}_\infty$ functions
$\underline\alpha, \overline{\alpha}, \eta$ such that for all $x \in \Omega$,
\begin{align}
    \underline\alpha{}(|x|^2) &\leq V(x) \leq \overline{\alpha}(|x|^2), \label{eq:conv_bound1} \\
    \nabla V(x) \cdot f(x) &\leq -\eta(|x|^2). \label{eq:conv_bound2}
\end{align}
We consider as our dISS Lyapunov functional $\mathcal{V}(\rho)$ the
expected value of $V$ under the distribution $\rho$,
\begin{equation*}
  \mathcal{V}(\rho) = \int_\Omega V(x) \diff\rho.
\end{equation*}
In the following, we assume the appropriate convexity of the comparison functions
$\underline\alpha,\overline{\alpha}, \eta$ so that we can apply Jensen's inequality when needed.
In~\cite{DMN-JC:14}, this is formalized by assuming that the Lyapunov function $V(x)$
and the squared norm $|x|^2$ satify a certain comparison inequality. 
The condition is mild in compact domains, as seen in Remark~\ref{re:compact}
%When $\Omega$ is compact
%these assumptions become mild, as we only require local quadratic growth near the
%origin \footnote{Note that if the slope at the origin does not blow up, we can always
%bound the comparison functions on a bounded domain using linear functions
%(e.g., defining a linear upper bound as $\tilde{\alpha}(s) = s \sup_{s \in (0, R^2]} \frac{\alpha(s)}{s}$,
%where $R=\text{diam}(\Omega)$). 
%Because linear functions are simultaneously convex and concave, Jensen's inequality applies seamlessly.}.
In this way, integrating~\eqref{eq:conv_bound1} with respect to $\rho$, we obtain
the positivity bounds of the dISS Lyapunov 

\begin{equation*}
  \underline\alpha\left(\int_\Omega |x|^2 \diff\rho\right)
  \leq \mathcal{V}(\rho) \leq \overline{\alpha}\left(\int_\Omega |x|^2 \diff\rho\right).
\end{equation*}
Next, we evaluate the decay of the functional along the perturbed flow
given by the PDE in~\eqref{eq:flow-stochastic}.  Using the definition
of the time derivative of a functional along the flow, we have:
\begin{align*}
  \dot{\mathcal{V}}(\rho) &= \int_\Omega
                            \nabla V(x) \cdot \left(f(x) - \frac{1}{2\rho}\nabla\cdot(Q\rho)\right) \diff\rho \\
                          & = \int_\Omega
                            \nabla V(x) \cdot f(x) \diff\rho - \frac{1}{2} \int_\Omega \nabla V(x) \nabla\cdot(Q\rho) \diff x.
\end{align*}
We now bound both terms. For the drift term,
substituting~\eqref{eq:conv_bound2} and applying Jensen's inequality
yields:
\begin{equation*}
  \int_\Omega \nabla V(x) \cdot f(x) \diff\rho
  \leq \int_\Omega -\eta(|x|^2) \diff\rho \leq -\eta\left(W_2^2(\rho, \delta_0)\right).
\end{equation*}
For the diffusive term, applying integration by parts,
{\small
\begin{equation*}
  \frac{1}{2}
  \int_\Omega \nabla V(x) \nabla\cdot(Q\rho) \diff x =
  -\frac{1}{2} \int_\Omega \text{Tr}\left( g(x)^\top \nabla^2 V(x) g(x) u_t \right) \diff\rho,
\end{equation*}
}
Since $\Omega$ is compact, $V$ is smooth, and $g$ is continuous, the operator norm is globally
bounded on the domain. There exists a constant $L_g > 0$ such that
$\|\frac{1}{2} g(x)^\top \nabla^2 V(x) g(x)\| \leq L_g$ for all $x \in \Omega$. Thus:
{\small
\begin{equation*}
  \frac{1}{2} \int_\Omega \text{Tr}\left( g(x)^\top
    \nabla^2 V(x) g(x) u_t \right) \diff\rho \leq L_g \|u_t\| \int_\Omega \diff\rho = L_g \|u_t\|.
\end{equation*}
}
Defining $\chi(s) = \eta(\sqrt{s})$ and $\gamma(s) = L_g s$,
we recover the sufficient condition for the dISS Lyapunov functional:
\begin{equation*}
    \dot{\mathcal{V}}(\rho) \leq -\chi(W_2(\rho, \delta_0)) + \gamma(\|u_t\|).
\end{equation*}

\section{Proof of Lemma~\ref{le:proper-loss}}\label{sec:app-proof-proper}

Before presenting the proof, we recall first the definition of a proper
loss function.

\begin{definition}[Proper loss function]
  A continuously differentiable function
  $V: \Omega \rightarrow \mathbb{R}$ is said to be a proper loss
  function with respect a compact subset $\mathcal{A} \subset \Omega$
  if it achieves a strict global minimum $V^*$ on $\mathcal{A}$ (where
  $\min_{x \in \Omega} V(x) = V^* = V(y)$, for some $y \in A$)
  and the following holds:\\
  \noindent \textbf{P1)} The function $V - V^*$
  is a proper size function for $(\Omega, \mathcal{A})$.\\
  \noindent \textbf{P2)} The gradient magnitude $|\nabla V|$ is a
   size function for $(\Omega, \mathcal{A})$.  Equivalently, there
   exists a class $\mathcal{K}_{\infty}$ function $\alpha_2$ such that
   $\alpha_2(V(x) - V^*) \leq |\nabla V(x)|^2$ for all $x \in
   \Omega$.\\
   %When this holds with a linear function (i.e.,
   %$c(V(x) - V^*) \leq |\nabla V(x)|^2$ for a constant $c > 0$), we
   %say that $V$ satisfies the Polyak-\L{}ojasiewicz (PL)
   %condition.
   \noindent \textbf{P3)} The gradient $\nabla V$ is locally
   Lipschitz continuous.\oprocend\\
\end{definition}
Note that the minimizers of the functional $\mathcal{V}(\mu)$ are the set
of measures with support on the minimizers of $V(x)$, that is, $\PM$.
We then define $\mathcal{V}^*\coloneq \mathcal{V}(\PM)$, note that
$\mathcal{V}^*=V^*$.
\textit{Quadratic growth:} Since $V$ is a proper loss function,
$V - V^*$ is a size function with respect to $M$.  From the
properties of size functions, $V$ admits a lower bound in terms of
the distance to $M$. We
have $V(x) - V^* \ge \alpha_1(\dist(x, M)^2)$, where $\alpha_1$ is convex
by assumption. Integrating w.r.t.~$\mu$ and using Jensen's inequality:
\begin{align*}
  \int (V(x) - V^*)
  \diff \mu(x) &\ge
                 \alpha_1(\int \dist(x,M)^2 \diff \mu) \\
  \Longleftrightarrow \;
  \mathcal{V}(\mu) - \mathcal{V}^* &\ge \alpha_1(W_2^2(\mu, \PM)).
\end{align*}

\textit{Gradient dominance:} Using 
$|\nabla V(x)|^2 \ge \alpha_2(V(x) - V^*)$, and $\alpha_2$ convexity, 
we can integrate both sides with
respect to $\mu$ and apply Jensen's:
\begin{align*}
  \int |\nabla V(x)|^2
  \diff \mu(x) &\ge \int \alpha_2 (V(x) - V^*) \diff \mu \quad \Longleftrightarrow \\
  \Big\|\nabla
  \frac{\delta \mathcal{V}(\mu)}{\delta \rho}\Big\|_\mu^2
               &\ge \alpha_2 (\mathcal{V}(\mu) - \mathcal{V}^*).
\end{align*}

\textit{Local $l$-smoothness}: Since $V$ is a proper loss function,
its gradient $\nabla V$ is locally Lipschitz.  Restricting our
attention to any compact sublevel set (or bounded region), there
exists a constant $L > 0$ such that for all $x,y$ in this set:
\begin{equation*}
  V(y) \le V(x) + \nabla V(x)^\top(y-x) + \frac{L}{2}|y-x|^2.
\end{equation*}
Let $\pi \in \Pi(\mu, \nu)$ be the optimal transport plan between
$\mu$ and $\nu$.  Integrating the inequality with respect to
$\pi(x,y)$ yields:
\begin{align*}
  \int V(y) \diff \pi
  &\le \int V(x) \diff \pi + \int \nabla V(x)^\top (y-x) \diff \pi \\
                      &\quad + \frac{L}{2} \int |y-x|^2 \diff \pi.
\end{align*}
Assuming that the optimal plan is induced by a deterministic map
$y = T_{\mu\to\nu}(x)$, and due to the optimality of $\pi$, the
inequality evaluates to:
\begin{equation*}
  \mathcal{V}(\nu) \le \mathcal{V}(\mu)
  +
  \langle\nabla\frac{\delta \mathcal{V}(\mu)}{\delta \rho},T_{\mu\to\nu}-\Id \rangle_\mu
  + \frac{L}{2} W_2^2(\mu, \nu).
\end{equation*}
This confirms $\mathcal{V}$ inherits the $l$-smoothness of $V$
along optimal transport geodesics.

\end{appendices}

\end{document}

%% file: macros-abbrev.tex
\newcommand{\real}{\ensuremath{\mathbb{R}}}

\newcommand{\realnonnegative}{\ensuremath{\mathbb{R}}_{\ge 0}}

\newcommand{\supscr}[2]{#1^{\textup{#2}}}
\newcommand{\setdef}[2]{\{#1 \; | \; #2\}}

\newcommand{\dist}{\operatorname{dist}}

\renewcommand{\epsilon}{\varepsilon}

\usepackage{psfrag}